\newcommand{\bra}[1]{{\left\langle{#1}\right\vert}}
\newcommand{\ket}[1]{{\left\vert{#1}\right\rangle}}
\newcommand{\qw}[1][-1]{\ar @{-} [0,#1]}
\newcommand{\qwx}[1][-1]{\ar @{-} [#1,0]}
\newcommand{\gate}[1]{*{\xy *+<.6em>{#1};p\save+LU;+RU **\dir{-}\restore\save+RU;+RD **\dir{-}\restore\save+RD;+LD **\dir{-}\restore\POS+LD;+LU **\dir{-}\endxy} \qw}
\newcommand{\control}{*!<0em,.025em>-=-{\bullet}}
\newcommand{\ctrl}[1]{\control \qwx[#1] \qw}
\newcommand{\targ}{*=<.8em,.8em>{\xy {\ar @{ - } +<.39em,0em> \ar @{ - } -<.39em,0em> \ar @{ - } +<0em,.39em> \ar @{ - } -<0em,.39em>},*\cir<.4em>{} \endxy} \qw}
\newcommand{\lstick}[1]{*!R!<.5em,0em>=<0em>{#1}}
\newcommand{\Qcircuit}{\xymatrix @*=<0em>}
\newcommand{\complex}{{\mathbb C}}
\newcommand{\reals}{{\mathbb R}}
\newtheorem{theorem}{Theorem}
\newtheorem{rem}{Remark}
\newtheorem{prop}{Proposition}
\newtheorem{cor}{Corollary}
\newtheorem{example}{Example}
\begin{document}

\title{On the Representation of Boolean and Real Functions as Hamiltonians for Quantum Computing}

\author{Stuart Hadfield\footnote{email: stuart.hadfield@nasa.gov / shadfield@usra.edu}}
\affil{Quantum Artificial Intelligence Lab, 
NASA Ames Research Center, Moffett Field, CA 94035}
\affil{USRA Research Institute for Advanced Computer Science, 
Mountain View, CA 94043}
\affil{Department of Computer Science, Columbia University, New York, NY 10027}

\maketitle
 
\begin{abstract}
 Mapping functions on bits to Hamiltonians acting on qubits has many applications in quantum computing. In particular, Hamiltonians representing Boolean functions 
are required for applications of quantum annealing or the quantum approximate optimization algorithm to combinatorial optimization problems. We show how 
such functions are naturally represented by Hamiltonians given as sums of Pauli~$Z$ operators (Ising spin operators) with the terms of the sum corresponding to the function's Fourier expansion.
For many classes of Boolean functions which are given by a compact description, such as a Boolean formula in conjunctive normal form that gives an instance of the satisfiability problem, it 
is \#P-hard to compute its Hamiltonian representation,
i.e., as hard as computing its number of satisfying assignments. 
On the other hand, no such difficulty exists generally for constructing Hamiltonians representing a real function such as a sum of local Boolean clauses each acting on a fixed number of bits as is common in constraint satisfaction problems. 
We show composition rules for explicitly constructing Hamiltonians representing a wide variety of
Boolean and real functions by combining Hamiltonians representing simpler clauses as building blocks, which %
are particularly suitable for direct implementation %
as classical software. 
We further apply our results to the construction of controlled-unitary operators, and 
to the special case of operators that compute function values in an ancilla qubit register.   
Finally, we outline several additional applications and extensions of our results to
quantum algorithms for optimization.  
A %
goal of this %
work is to provide a \textit{design toolkit for quantum optimization} 
which may be utilized by experts and practitioners alike in the 
construction and analysis of new quantum algorithms, and at the same time to %
provide a unified framework for the various 
 constructions appearing in the literature. 
\end{abstract}

\maketitle

\section{Introduction}
A %
basic %
requirement of many %
quantum algorithms 
is the ability to translate between mathematical functions acting on a domain, typically strings of bits, and  quantum mechanical Hamiltonian operators acting on qubits. 
In particular, 
mapping Boolean or real functions %
to %
Hamiltonians has important 
applications %
in quantum algorithms 
and heuristics 
for solving decision %
or optimization problems 
such as %
quantum annealing and adiabatic quantum optimization%
~\cite{nishimoriQA,farhi2000quantum,mcgeoch2014adiabatic},  %
or the quantum approximate optimization algorithm 
and quantum alternating operator ansatz
(QAOA) \cite{hogg2000quantum,Farhi2014,hadfield2019quantum}, or the related variational quantum eigensolver%
~\cite{peruzzo2014variational}. 
Explicit %
Hamiltonian %
constructions for 
the application of these algorithms to 
a variety of 
prototypical problems can be found in  \cite{hadfield2019quantum,LucasIsingNP}, though prior work has mostly focused on reductions to quadratic Hamiltonians at the expense of additional qubits, for example in the penalty term approach of quantum annealing; we explain how direct (not necessarily quadratic) mappings are desirable for quantum gate model algorithms. 
Such quantum algorithms are promising, in particular,   
as possible paths towards performing useful computation on near-term quantum computing devices. 
Indeed, decision and optimization problems are ubiquitous across science and engineering, yet often appear to be computationally difficult. Despite years of investigation, efficient algorithms %
often remain elusive %
\cite{Ausiello2012complexity,AroraBarak}.  
 Hence, 
the potential for new approaches to tackling these problems on quantum computers 
is an exciting %
development. 

Nevertheless, the conceptual barrier to entry to studying these quantum 
algorithms and providing new insights remains high, especially for practitioners in the domain where a given problem arises, %
who may not be familiar with quantum computing beyond the basics. 
It is thus important to develop tools and methodologies which are accessible to scientists and researchers from different domains, %
and are as independent of knowing the low-level details of quantum computing as possible, towards  
enabling easier cross-fertilization of different ideas and techniques. %
 At the same time, it is useful to provide a %
rigorous general %
foundation for %
existing constructions and tools %
found 
in the literature, 
often in a specific context. %
Thus, a motivating goal of this work %
is to provide a %
 \textit{design toolkit} 
of basic results and methodologies which can be used by experts or laymen alike to design, implement, and analyze  
quantum algorithms. %
To this end, our results %
allow for straightforward implementation 
as computer programs such that Hamiltonian and quantum circuit 
mappings for many classes of problems may be 
automatically generated. 

In this paper we show %
a general theory of mappings of Boolean and %
real functions to diagonal Hamiltonians acting on qubits, 
and give simple logical rules for the explicit construction of these Hamiltonians %
which include 
many common %
classes of functions such as Boolean formulas and circuits. We also address the question of when such Hamiltonians may or may not be constructed efficiently.
We show how our 
results may be applied %
to the construction of unitary operators controlled by Boolean predicates, which are used, %
for example, in several of the QAOA mixing operator constructions of~\cite{hadfield2019quantum}. 
Our %
results are general and give a methodical approach to 
derive many of the mappings in the literature such as those of 
\cite{hadfield2019quantum,LucasIsingNP}. 
We emphasize that our results have applications to quantum algorithms beyond %
quantum annealing or QAOA, and we 
discuss a number of examples.

We elaborate on our results, which we summarize in the next section. 
Consider a function $f$ acting on $n$ bits. 
We say a Hamiltonian $H_f$ \textit{represents} $f$ %
if it satisfies 
\begin{equation}  \label{eq:eigenvalueHf}
H_f \ket{x} = f(x) \ket{x}
\end{equation}
for each input string $x \in \{0,1\}^n$ with corresponding (i.e., encoded as) computational basis state~$\ket{x}$. (We will typically assume~$f$ is decidable and moreover can be efficiently evaluated classically; see for example~\cite{nielsen1997computable} for a discussion of Hamiltonian families that encode the halting problem.)
We show how arbitrary $n$-bit Boolean or real functions are naturally represented as diagonal Hamiltonians 
given by weighted sums of Pauli $Z$ operators,  %
with terms corresponding to the function's 
Fourier expansion, as summarized in  Theorem~\ref{thm:1booleanFourierRep} below. 
Such Hamiltonians generalize the Ising %
model of interacting spin-$1/2$ particles well-known in physics. %
Our results rely on the %
Fourier analysis 
of Boolean functions, which has %
a long history of important applications in computer science 
 \cite{kahn1988influence,linial1993constant,beigel1993polynomial,nisan1994degree,hammer2012boolean,gu2019majorana}
and %
in particular quantum computing~\cite{beals2001quantum,ambainis2006polynomial, montanaro2008quantum,boixo2017fourier}. 
We use our results to derive explicit Hamiltonian representations of a number of basic Boolean predicates shown in Table~\ref{tab:basicHams} below. 
Combining the rules of classical propositional logic with the properties of the Fourier expansion %
leads to %
composition rules for constructing Hamiltonians representing %
 conjunctions, disjunctions, exclusive or, and other functions of simpler clauses by combining their Hamiltonian representations in particular ways; see Theorem~\ref{thm:compositionRules} below.   
 Furthermore, these mappings directly extend to constructing Hamiltonians representing weighted sums of clauses, which are a primary class of Hamiltonians considered in quantum annealing and QAOA, for example, for %
 constraint satisfaction problems.

We also consider the computational complexity of such Hamiltonian constructions,
which naturally depends on how the function $f$ is provided as input. 
Many %
combinatorial properties of a given function can be \lq\lq read off\rq\rq\ from its Fourier coefficients \cite{od2014analysis}. 
This presents an obstruction to computing the %
Hamiltonian representation for general Boolean functions; we show that computing the identity component of $H_f = \widehat{f}(\emptyset) I + \dots $, which is given 
by the first Fourier coefficient $\widehat{f}(\emptyset)$ of $f$,
is as hard as counting the the number of inputs such that $f=1$,
which %
in general is computationally intractable. 
For example, if $f$ is a
Boolean formula on $n$ variables given in conjunctive normal formula form with a ${\rm poly(n)}$ size description, i.e., an instance of the satisfiability problem (SAT), then this task is $\#$P-hard and hence  
it is not believed possible to efficiently compute $\widehat{f}(\emptyset)$ in general; 
if such a classical polynomial-time algorithm existed 
then we would have P=NP. 
Hence, we cannot efficiently construct explicit Hamiltonian representations of many $n$-bit Boolean functions, %
even when such a function %
may be  efficiently evaluated. We apply our results to show a similar result holds for computing the Pauli operator expansion of quantum circuits that compute~$f$ in a register. %

Nevertheless, there is no such difficulty for \textit{local} Boolean functions $f_j$ %
where each $f_j$ acts on a constant number of bits. 
This allows us to efficiently construct Hamiltonians representing \textit{pseudo-Boolean} functions of the form $f(x) = \sum_{j=1}^m w_j f_j (x)$, with $w_j\in\reals$ and $m={\rm poly}(n)$. %
Such real functions (and their corresponding Hamiltonians) may be constructed to directly encode optimization problems of interest, or 
such that their minimum value arguments (ground state eigenvectors) encodes the solution to a corresponding decision problem; similar to, for example, how solving the MAX-SAT problem (finding the maximum possible number of satisfied clauses) also solves SAT. %
For a pseudo-Boolean function, %
its Fourier coefficients do not %
allow its extremal values to be \lq\lq read off\rq\rq\ in the same way and so its Hamiltonian representation can often be computed efficiently. 
Indeed, this is a common approach 
to encoding decision problems such as SAT into the framework of 
quantum annealing~\cite{LucasIsingNP}. Our results likewise apply to designing penalty term approaches for problems or encodings with hard feasibility constraints which we discuss in Section~\ref{sec:penalties}.

\renewcommand{\arraystretch}{1.25}  %
\begin{table}[h]
\begin{center}
\caption{Hamiltonians representing basic Boolean clauses.} 
\label{tab:basicHams}
\begin{tabular}{| c | c || c | c |}
	\hline
	$f(x)$ &$H_f$ & $f(x)$ & $H_f$  \\ %
	\hhline{|=|=||=|=|}
	$x$ & $\tfrac12 I - \tfrac12 Z$ & $\overline{x}$ &$\tfrac12 I + \tfrac12 Z$   \\
	\hline
			$x_1 \oplus x_2 $ & $ \tfrac12 I-\tfrac12 Z_1Z_2$ & 
	 $\bigoplus_{j=1}^kx_j$  & $ \frac12 I - \frac12Z_1Z_2 \dots Z_k$  \\
	\hline
		$x_1 \wedge x_2$   &  $\tfrac14I-\tfrac14(Z_1+Z_2-Z_1Z_2)$  & $\bigwedge_{j=1}^kx_j$ & $\tfrac{1}{2^k} \prod_j (I-Z_j)$ \\
	\hline
	$x_1 \vee x_2 $  & $\tfrac34I -\tfrac14(Z_1+Z_2+Z_1Z_2)$  & $\bigvee_{j=1}^kx_j$ & $I-\frac{1}{2^k} \prod_j (I+Z_j)$ \\
	\hline
	$\overline{x_1x_2} $ & $ \tfrac34 I+  \tfrac14( Z_1 + Z_2 - Z_1 Z_2)$ & 
	$x_1 \Rightarrow x_2$  & $ \frac34 I + \frac14(Z_1 - Z_2 + Z_1Z_2)$  \\
	\hline
\end{tabular}
\end{center}
\end{table}
\subsection{Main Results: A %
Toolkit for Quantum Optimization Algorithms}
We summarize our main results. The details of representing Boolean and real (pseudo-Boolean) functions as Hamiltonians  are addressed in Section~\ref{sec:DiagHams} where we first briefly review and then apply Fourier analysis on the Boolean cube. In Section~\ref{sec:diagApplications} we outline the application of our results to constructing diagonal operators for  optimization applications, including ground state logic approaches and the implementation of diagonal phase unitaries for QAOA. 
We extend our results to (non-diagonal) controlled unitaries in Section \ref{sec:nonDiagHams}, including the setting of operators computing Boolean functions in ancilla registers, which is complementary to our main Hamiltonian approach.
Several illustrative examples and remarks are provided throughout. 
We conclude with a discussion of our results and future research questions in Section~\ref{sec:discussion}. %

\subsubsection{Boolean Functions}
Boolean-valued %
functions %
on the Boolean cube~$\{0,1\}^n$ 
are uniquely represented as diagonal Hamiltonians as in (\ref{eq:eigenvalueHf}) by a %
real multilinear polynomial of Pauli $Z$ operators, with terms %
corresponding to the function's Fourier expansion. %

\begin{theorem}  \label{thm:1booleanFourierRep}
For a Boolean function $f:\{0,1\}^n \rightarrow \{0,1\}$, 
the unique %
Hamiltonian on $n$ qubits satisfying $\;H_f \ket{x} = f(x)\ket{x}\;$ 
for each computational basis state $\ket{x}$ is %
\begin{eqnarray}  \label{eq:Hfexpan}
H_f &=& \sum_{S\subset [n]} \widehat{f} (S) \; \prod_{j\in S} Z_j          %
\,\,=\,  \widehat{f}(\emptyset)I + \sum_{j=1}^{n}\widehat{f}(\{j\})Z_{j} +  \sum_{j<k}\widehat{f}(\{j,k\})Z_jZ_k + \dots%
\end{eqnarray}
where the Fourier coefficients 
\begin{equation}
    \widehat{f} (S)  = \frac{1}{2^n} \sum_{x\in \{0,1\}^n} f(x) (-1)^{S\cdot x}=\frac{1}{2^n} {\rm tr}( H_f \prod_{j\in S} Z_j)
\end{equation}
satisfy 
$ \widehat{f}(\emptyset) \in [0,1] $, $\widehat{f} (S) \in [-\frac12,\frac12 ]$
for $S \neq \emptyset$, 
\begin{equation}  \label{eq:thm2eq2}
  \sum_{S\subset [n]} \widehat{f} (S)  = f(0^n)
\end{equation}
where $0^n$ denotes the input bit string of all $0$s, and
\begin{equation}  \label{eq:f0coefficient}
 \sum_{S\subset [n]}  \widehat{f} (S)^2 = \; \frac{1}{2^n} \sum_{x\in \{0,1\}^n} f(x) \;  = \;  \widehat{f}(\emptyset).
\end{equation}
\end{theorem}
\noindent The proof of the theorem is shown in Sec.~\ref{sec:boolean}. %
Here we have used the %
notation 
$S\cdot x := \sum_{j\in S}x_j$, 
 $[n]:=\{1,2,\dots,n\}$, 
and $Z_j:=I\otimes \dots \otimes I \otimes Z \otimes I \dots \otimes I$
to denote the Pauli $Z$ operator applied to the $j$th qubit. 
We emphasize that the theorem also applies to functions %
that depend only on a subset of $k<n$ bits, where $k$ may be independent of $n$. For example, the function $x_1\oplus x_2\oplus x_3$ is easily seen to map to the Hamiltonian $\tfrac12 I-\tfrac12Z_1Z_2Z_3$. 
We note that expansions similar to (\ref{eq:Hfexpan}) %
have been considered in the context of implementing diagonal unitaries~\cite{schuch2003programmable,welch2014efficient} and analyzing their quantum gate complexity~\cite{amy2018controlled}, whereas Theorem~\ref{thm:1booleanFourierRep} and our results to follow are more generally applicable. 
The generalization of Fourier analysis 
to functions over complex numbers (i.e., quantum circuit amplitudes) has been employed to analyze %
the complexity of %
sampling from quantum circuit models such as IQP circuits~\cite{bremner2017achieving,boixo2017fourier}. 
We emphasize that in gate model applications, %
higher order Pauli $Z$ interactions may be implemented efficiently; this is in contrast to quantum annealing applications where interactions must typically be reduced to low-order (quadratic) ones using ancilliary qubits to accommodate physical implementation~\cite{LucasIsingNP}. %
See Sec.~\ref{sec:HamSim} for additional discussion.

Thus, from (\ref{eq:f0coefficient}) in we see that computing the Hamiltonian representation (\ref{eq:Hfexpan}) of a Boolean function is equivalent to computing its Fourier expansion, 
and is at least as computationally difficult as computing %
$\widehat{f}(\emptyset)$. 
\begin{cor}  \label{cor:sharpPFourier}
Computing the identity coefficient $\widehat{f}(\emptyset)$ of the Hamiltonian $H_f$ representing %
a Boolean satisfiability (SAT) formula $f$ %
(given in conjunctive normal form) 
is $\#P$-hard. 
Deciding if  $\widehat{f}(\emptyset)=0$ is equivalent to deciding if $f$ is unsatisfiable, in which case $H_f$ %
reduces to the $0$ matrix. 
\end{cor}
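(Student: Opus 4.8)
The plan is to leverage the identity~(\ref{eq:f0coefficient}) established in Proposition~\ref{prop:1booleanFourierRep}, which gives $\widehat{f}(\emptyset) = \frac{1}{2^n}\sum_{x\in\{0,1\}^n} f(x)$. Since $f$ is $\{0,1\}$-valued, the sum on the right is exactly the number of satisfying assignments of $f$, so $\widehat{f}(\emptyset) = 2^{-n}\,|f^{-1}(1)|$. Thus computing the identity coefficient is, up to a known scaling factor $2^n$, the same as solving the counting problem $\#\mathrm{SAT}$, and the whole corollary reduces to combining this observation with the known hardness of counting and the nonnegativity of $f$.

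First I would establish the $\#P$-hardness by a reduction from $\#\mathrm{SAT}$, the canonical $\#P$-complete problem (Valiant's theorem). Given a CNF formula $f$ on $n$ variables, any procedure that outputs $\widehat{f}(\emptyset)$ yields the exact count $|f^{-1}(1)| = 2^n\,\widehat{f}(\emptyset)$ after a single multiplication by $2^n$; since this count is an integer in $[0,2^n]$, a polynomial-time algorithm for $\widehat{f}(\emptyset)$ would give one for $\#\mathrm{SAT}$, so computing $\widehat{f}(\emptyset)$ is $\#P$-hard. The same reduction shows that even a classical polynomial-time algorithm for $\widehat{f}(\emptyset)$ would imply $\mathrm{P}=\mathrm{NP}$, since a $\#\mathrm{SAT}$ oracle decides $\mathrm{SAT}$ by checking whether the count is positive.

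For the second assertion I would argue directly from nonnegativity. Because $f(x)\ge 0$ for all $x$, the equality $\widehat{f}(\emptyset)=2^{-n}\sum_x f(x)$ shows that $\widehat{f}(\emptyset)=0$ if and only if $f(x)=0$ for every $x$, i.e.\ if and only if $f$ is unsatisfiable. In that case $f\equiv 0$, so every Fourier coefficient vanishes; this follows most cleanly from~(\ref{eq:f0coefficient}), whose left-hand side $\sum_{S}\widehat{f}(S)^2$ equals $\widehat{f}(\emptyset)=0$ and hence forces each $\widehat{f}(S)=0$. By the expansion~(\ref{eq:Hfexpan}) we then obtain $H_f=0$ exactly, which is the stated claim that $H_f$ reduces to the zero matrix.

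I do not expect a genuine mathematical obstacle: once~(\ref{eq:f0coefficient}) is in hand the result is essentially immediate. The one point requiring care is the computational-model bookkeeping in the hardness reduction, namely checking that recovering the integer count from the rational value $\widehat{f}(\emptyset)=k/2^n$ (and, symmetrically, presenting $\widehat{f}(\emptyset)$ itself) stays within polynomial time and polynomial bit-length. This holds because $k\le 2^n$ has only $O(n)$ bits, so the scaling by $2^n$ is a polynomial-time operation. I expect this precision about representation size, rather than any analytic difficulty, to be the only subtle step to state carefully.
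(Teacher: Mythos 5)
Your proposal is correct and follows essentially the same route as the paper: both rest on the identity $\widehat{f}(\emptyset)=2^{-n}\sum_x f(x)$ from~(\ref{eq:f0coefficient}), which makes computing the identity coefficient equivalent (up to a factor of $2^n$) to $\#\mathrm{SAT}$, and both deduce $H_f=0$ in the unsatisfiable case from the vanishing of $\sum_S\widehat{f}(S)^2=\widehat{f}(\emptyset)$. Your explicit attention to the bit-length bookkeeping in the reduction is a harmless refinement of what the paper leaves implicit.
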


Hence, given a Boolean function $f$ %
such that counting its number of satisfying inputs is $\#P$-hard, computing the identity coefficient $\widehat{f}(\emptyset)$ of its Hamiltonian representation will be $\#P$-hard also. 
Such hard counting problems include not only functions 
corresponding to NP-hard decision problems such as SAT, but also certain functions corresponding to %
decision problems decidable in polynomial time, such as counting the number of perfect matchings in a bipartite graph; see, e.g.,~\cite{AroraBarak}. %
We emphasize that even if we %
can compute the value of each Fourier coefficient, 
a Hamiltonian~$H_f$ representing a general Boolean %
or real function on $n$ bits may require 
a number of Pauli~$Z$ terms that is exponentially large with respect to $n$; such an example is the logical AND of $n$ variables (see Table~\ref{tab:basicHams}).%

On the other hand, when a Boolean clause $f$ acts only a number of bits~$k< n$ that is constant or logarithmically scaling we may always efficiently construct its Hamiltonian representation as the number of nonzero terms in the  sum~(\ref{eq:Hfexpan}) (%
the \textit{size} of $H_f$) %
is in this case at most $2^k$. 
The \textit{degree} of $H_f$, $\deg(H_f)=\deg(f)=d$, is the maximum locality (number of qubits acted on) of any such term. 
Bounded-degree Hamiltonians $H_f = \sum_{S\subset [k], |S| \leq d} \widehat{f} (S) \; \prod_{j\in S} Z_j$, $k\leq n$, are similarly always efficiently representable\footnote{We say a family of functions $\{f_{n,j}\}$ on $n$ bits  %
is \textit{efficiently representable} as %
the Hamiltonians~$\{H_{f_{n,j}}\}$ if ${\rm size}(H_{f_{n,j}})$ grows polynomially with~$n$ } %
as we show 
${\rm size}(H_f)\leq  (e/d)^{d-1}k^d +1$ 
which is always 
polynomial in $n$ if $d=O(1)$.
For example, bounded-locality constraint satisfaction problems such as MAX-CUT or MAX-$\ell$-SAT (up to $\ell=O(\log n)$)  are described by sums of local clauses %
and hence are always
efficiently representable as Hamiltonians. %
We summarize mappings of some important basic clauses in Table~\ref{tab:basicHams} above.

\subsubsection{Constructing Hamiltonians using Propositional Logic}

We %
show 
formal rules for combining Hamiltonians representing different Boolean functions to obtain Hamiltonians representing more complicated logical expressions. %
In particular, we consider  %
the logical negation$(\neg)$, conjunction $(\wedge)$, disjunction $(\vee)$, exclusive or $(\oplus)$, and implication $(\Rightarrow)$ operations, and addition or multiplication by a real number when Boolean functions are embedded as a subset of real-valued functions.

\begin{theorem}[Composition rules]  \label{thm:compositionRules}
Let $f,g$ be Boolean functions represented by Hamiltonians  $H_f,H_g$. 
Then the Hamiltonians representing basic operations on $f$ and $g$ are given by %
  \begin{multicols}{2}
    \begin{itemize}
        \item $\,H_{\neg f} = H_{\overline{f}} = I - H_f$
        \item $\,H_{f\wedge g} = H_{fg} = H_f H_g$
        \item $\,H_{f\oplus g} =  H_f + H_g - 2H_fH_g$
            \item $\,H_{f\Rightarrow g} = I - H_f + H_fH_g$
        \item $\,H_{f\vee g} = H_f + H_g - H_fH_g$
           \item $\,H_{af+bg} = a H_f+b H_g\;\;\;\;  a,b \in \reals.$
    \end{itemize}
  \end{multicols}
\end{theorem}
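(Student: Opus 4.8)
The plan is to reduce every identity to the \emph{uniqueness} guaranteed by Proposition~\ref{prop:1booleanFourierRep}. Recall that $H_f$ represents $f$ precisely when $H_f\ket{x}=f(x)\ket{x}$ for every computational basis state, so $H_f$ is diagonal in this basis; by Proposition~\ref{prop:1booleanFourierRep} it is the unique such operator, and being a sum of $Z$ operators it commutes with $H_g$. Consequently $H_f$ and $H_g$ are simultaneously diagonal, and \emph{any} polynomial expression $P(H_f,H_g)$ acts on a basis state by substituting eigenvalues:
\begin{equation*}
P(H_f,H_g)\ket{x}=P\big(f(x),g(x)\big)\ket{x}.
\end{equation*}
Hence to prove $H_{h}=P(H_f,H_g)$ for a target Boolean combination $h$, it suffices to check the \emph{scalar} identity $P(a,b)=h(a,b)$ for all $a,b\in\{0,1\}$; the resulting operator then has the correct eigenvalues and, being diagonal, is \emph{the} representing Hamiltonian by uniqueness.

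First I would dispatch the single-function rules. For negation, $1-a$ equals $\overline{a}$ on $a\in\{0,1\}$, giving $H_{\neg f}=I-H_f$. For the real linear rule, linearity of the eigenvalue equation gives $(aH_f+bH_g)\ket{x}=(af(x)+bg(x))\ket{x}$ directly, with no restriction to Boolean values, so this case also covers pseudo-Boolean $f,g$ and real $a,b\in\reals$.

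Next I would verify the two-function Boolean rules by a finite check over the four inputs $(a,b)\in\{0,1\}^2$. One confirms that $ab$ realizes $a\wedge b$, that $a+b-ab$ realizes $a\vee b$, that $a+b-2ab$ realizes $a\oplus b$, and that $1-a+ab$ realizes $a\Rightarrow b$ (e.g.\ it evaluates to $1,1,0,1$ on the inputs $(0,0),(0,1),(1,0),(1,1)$). Each of these polynomials takes values in $\{0,1\}$ on $\{0,1\}^2$, so the corresponding operator $P(H_f,H_g)$ has $0/1$ eigenvalues and indeed represents a Boolean function, whence Proposition~\ref{prop:1booleanFourierRep} identifies it as the claimed $H_h$.

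I do not expect a genuine obstacle here: the content is entirely the reduction argument of the first paragraph. The only points requiring care are (i)~justifying that products such as $H_fH_g$ are well defined and diagonal---which follows because the representation property forces both operators to be diagonal in the computational basis and hence to commute---and (ii)~confirming that each combining polynomial is $\{0,1\}$-valued on Boolean inputs, so that the uniqueness of Proposition~\ref{prop:1booleanFourierRep} genuinely applies and the closed forms in Table~\ref{tab:basicHams} are recovered as special cases.
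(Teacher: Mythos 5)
Your proof is correct and follows essentially the same route as the paper: both reduce each rule to the pointwise arithmetic identities $fg$, $f+g-fg$, $f+g-2fg$, $1-f+fg$ over $\{0,1\}$ and then lift them to the Hamiltonians via the fact that the representation map sends function arithmetic to operator arithmetic on simultaneously diagonal operators. Your version is slightly more explicit than the paper's in justifying why $H_{fg}=H_fH_g$ (eigenvalue substitution on a common eigenbasis plus uniqueness), but the underlying argument is the same.
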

The proof of theorem is given in Section~\ref{sec:compRules}. 
Hamiltonians for a %
wide variety of functions can be easily constructed using the  composition rules and results for basic clauses as in Table~\ref{tab:basicHams}, 
in particular %
Boolean formulas and circuits. 

\begin{rem}
Theorems~\ref{thm:1booleanFourierRep}, \ref{thm:compositionRules}, and \ref{thm:pseudoBool} below facilitate  straightforward software implementation for generating diagonal Hamiltonians, for example, automating %
QAOA mappings for constraint satisfaction problems with increasingly general types of constraints. %
Examples of three variable clauses are generated in Table~\ref{tab:3qubitHams} below. Such Hamiltonians may be useful \emph{intermediate representations} in applications.  
\end{rem}

\begin{table}[h]
\begin{center}
\caption{Example: Hamiltonians representing Boolean clauses on three variables. }
\label{tab:3qubitHams}
\begin{tabular}{| c | c |}
	\hline
	$f(x)$ &$H_f$   \\ %
\hhline{|=|=|}
    $MAJ(x_1,x_2,x_3) $ & $ \frac12 I - \frac14 \left(Z_1 + Z_2 + Z_3 - Z_1 Z_2 Z_3\right) $ \\
    \hline  	
		$NAE(x_1,x_2,x_3) $ & $ \frac34 I - \frac14(Z_1Z_2 + Z_1Z_3 + Z_2Z_3)$ \\
	\hline
	$MOD_3(x_1,x_2,x_3)$ & $\frac14 I+ \frac14(Z_1Z_2+Z_2Z_3+Z_1Z_3)$\\
			\hline
	 $1in3(x_1,x_2,x_3)$ &
       $\frac{1}{8}(3I +Z_1 +Z_2 +Z_3 -Z_1Z_2 -Z_2Z_3 -Z_1Z_3 -3Z_1Z_2Z_3)$ \\  
	\hline
\end{tabular}
\end{center}
\end{table}

\subsubsection{Pseudo-Boolean functions}
Consider a 
real function $f$ on $n$ bits %
given as a weighted sum of %
Boolean functions~$f_j$, 
$$ f(x) = \sum_{j=1}^m w_j f_j (x) \;\;\;\;\;  w_j \in \reals,$$  %
where each $f_j$ acts on a subset of the $n$ bits, and in the applications we consider often  %
$m=\textrm{poly}(n)$. 
Objective functions for constraint satisfaction problems, considered for example in QAOA, are %
often expressed in this form, with each $f_j$ given by a Boolean clause. 
A different example is the penalty term approach of quantum annealing, 
where the objective function is augmented with a number of high-weight 
penalty terms which perform (typically, local) checks that a state is valid; see Sec.~\ref{sec:penalties} for a discussion. 
For such \textit{pseudo-Boolean functions} we have the following result 
generalizing Thm.~\ref{thm:1booleanFourierRep}, which is  
shown in Section~\ref{sec:pseudoBool}. %
Pseudo-Boolean optimization is a rich topic and we refer the reader to \cite{boros2002pseudo} for an overview.

\begin{theorem} \label{thm:pseudoBool}
For an $n$-bit %
real function $f:\{0,1\}^n \rightarrow \reals$ %
the unique Hamiltonian on $n$ qubits satisfying $H_f \ket{x} = f(x)\ket{x}$ %
is 
\begin{eqnarray}  \label{eq:Hfexpan2}
 H_f = \sum_{S\subset [n]} \widehat{f} (S) \; \prod_{j\in S} Z_j,
\end{eqnarray} 
with coefficients $\widehat{f} (S) =
  \frac{1}{2^n} \sum_{x\in \{0,1\}^n} f(x) (-1)^{S\cdot x} =\frac1{2^n}{\rm tr}( H_f \prod_{j\in S} Z_j)\, \in \reals. $

In particular, for a pseudo-Boolean %
function %
$ f(x) = \sum_{j=1}^m w_j f_j (x) $, $w_j \in \reals$, 
where the $f_j$ are Boolean functions
corresponding to Hamiltonians $H_{f_j}$ as in~(\ref{eq:Hfexpan}), we have 
\begin{eqnarray}
H_f = \sum_{j=1}^m w_j H_{f_j} \;\;\;\;\; \text{ and } \;\;\;\;\; \widehat{f} (S)  = \sum_j \widehat{f_j} (S) \in 
 \reals
\end{eqnarray}
with $d:= \deg(H_f) \leq  \max_j \deg(f_j)$ and 
$\;{\rm size}(H_f)\leq \min\{ \sum_j {\rm size}(H_{f_j})
, (e/d)^{d-1}n^d +1\}$.
\end{theorem}

Our results also yield direct cost estimates for quantum simulation of diagonal Hamiltonians, which we elaborate on in Section~\ref{sec:HamSim}, though we emphasize that simulation %
is 
not the %
motivating application %
our results; %
 nevertheless, it is an important piece of our design toolkit. Indeed, such operators occur for example in QAOA or Grover's algorithm.  %
As the terms in~(\ref{eq:Hfexpan2}) mutually commute, we can \textit{simulate} such an a Hamiltonian, i.e., implement the operator $U=exp(-iH_ft)$ for some fixed  $t \in \reals$, 
using $O({\rm deg}(H_{f}) \cdot {\rm size}(H_{f}))$ many basic quantum gates.  %
Indeed, similar constructions for implementing diagonal unitaries have been previously proposed~\cite{schuch2003programmable,welch2014efficient}, though more efficient circuits often result by utilizing ancilla qubits such as the operators we consider in Proposition~\ref{prop:computeReg} below~\cite{barenco1995elementary,childs2004quantum}. %

The remaining two items of the section demonstrate how our results may be applied to %
the construction of more general (non-diagonal) %
Hamiltonians and unitary operators.

\subsubsection{Controlled Hamiltonians and Unitaries}
In many applications we require controlled Hamiltonian simulations. %
Such operators are closely related to \textit{block encodings} common in quantum computing~\cite{chakraborty2018power}. 
Consider two quantum registers of $k+n$ qubits. 
Given a Boolean function~$f(y)$ acting on $k$ bits and a unitary operator $U$ acting on $n$ qubits, 
we define the $(k+n)$-qubit $f$-controlled unitary operator $\Lambda_f (U)$ by its action on computational basis states 
$$ 
\Lambda_f (U) \ket{y}\ket{x} =  \left\{
                \begin{array}{ll}
                  \ket{y}\ket{x}\;\;\;\;\;\;\; \;  f(y)= 0  \\
                   \ket{y}U\ket{x} \;\;\;\;\;  f(y)= 1.
                \end{array}
              \right.
$$
Equivalently, 
as $H_f + H_{\overline{f}}=I$ we have the useful decomposition 
  $\Lambda_f (U)= H_f \otimes U + H_{\overline{f}} \otimes I$.

If $U$ is self-adjoint, then $\Lambda_f (U)$  %
is also a Hamiltonian. 
When $U$ is given as a time evolution under a Hamiltonian~$H$ for a time $t$,  
we have the following. 

\begin{prop}  \label{prop:controlledHams}
Let $f$ be a %
Boolean function represented by a $k$-qubit Hamiltonian $H_f$, 
and let $H$ be an arbitrary Hamiltonian acting on $n$ disjoint qubits. 
Then the $(k+n)$-qubit Hamiltonian 
\begin{equation}  \label{eq:ctrlHam}
\widetilde{H}_f = H_f \otimes H
\end{equation}
corresponds to %
$f$-controlled evolution under $H$,
i.e., for $t\in\reals$ satisfies %
\begin{equation}
    e^{-i \widetilde{H}_ft} = \Lambda_f ( e^{-iH t}).
\end{equation}
\end{prop}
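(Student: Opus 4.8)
The plan is to compute the matrix exponential $e^{-i\widetilde{H}_f t} = e^{-i(H_f \otimes H)t}$ from its power series and show that it collapses directly to the decomposition $\Lambda_f(e^{-iHt}) = H_f \otimes e^{-iHt} + H_{\overline{f}} \otimes I$ recorded just above the statement. The single structural fact that drives everything is that $H_f$ is a \emph{projector}. Indeed, since $f$ is Boolean, Proposition~\ref{prop:1booleanFourierRep} gives $H_f\ket{y} = f(y)\ket{y}$ with $f(y)\in\{0,1\}$ on every computational basis state $\ket{y}$ of the control register; hence $H_f$ is diagonal with $0/1$ eigenvalues and is idempotent, $H_f^2 = H_f$, so that $H_f^k = H_f$ for all integers $k\geq 1$.

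Given this, I would exploit that tensor powers factor, $(H_f\otimes H)^k = H_f^k \otimes H^k$, and combine it with idempotence to simplify each term of the series. Since the registers are finite-dimensional the exponential is entire and we may freely interchange summation and factor $H_f$ out of the tail $k\geq 1$:
\begin{equation}
e^{-i(H_f\otimes H)t} = I\otimes I + H_f \otimes \sum_{k=1}^{\infty}\frac{(-it)^k}{k!}H^k = I\otimes I + H_f\otimes\bigl(e^{-iHt}-I\bigr).
\end{equation}
Regrouping and using the negation rule $H_{\overline{f}} = I - H_f$ from Theorem~\ref{thm:compositionRules} turns the right-hand side into $H_f\otimes e^{-iHt} + (I-H_f)\otimes I = H_f\otimes e^{-iHt} + H_{\overline{f}}\otimes I$, which is exactly $\Lambda_f(e^{-iHt})$.

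As an independent check I would also verify the identity on a complete eigenbasis. Writing the spectral decomposition $H = \sum_\lambda \lambda\,\ketbra{\phi_\lambda}{\phi_\lambda}$, each product state $\ket{y}\ket{\phi_\lambda}$ is an eigenvector of $\widetilde{H}_f$ with eigenvalue $f(y)\lambda$, so $e^{-i\widetilde{H}_f t}\ket{y}\ket{\phi_\lambda} = e^{-if(y)\lambda t}\ket{y}\ket{\phi_\lambda}$; this equals $\ket{y}\ket{\phi_\lambda}$ when $f(y)=0$ and $\ket{y}\,e^{-iHt}\ket{\phi_\lambda}$ when $f(y)=1$, matching the definition of $\Lambda_f(e^{-iHt})$ term by term. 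I do not expect a genuine obstacle here: the only substantive insight is the idempotence of $H_f$, and once that is in hand the rest is a routine manipulation of an absolutely convergent series; the lone point requiring a word of care is justifying the interchange of the sum with the tensor factorization, which is immediate in finite dimension.
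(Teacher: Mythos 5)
Your proof is correct and follows essentially the same route as the paper, which simply states that the result "follows from exponentiating (\ref{eq:ctrlHam}) directly"; your power-series computation using the idempotence $H_f^2 = H_f$ (and the resulting collapse to $H_{\overline{f}}\otimes I + H_f\otimes e^{-iHt}$) is exactly the natural way to carry out that direct exponentiation. The eigenbasis check is a fine redundant confirmation but adds nothing beyond the series argument.
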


The proof follows from
from exponentiating (\ref{eq:ctrlHam}) directly, see Section~\ref{sec:ControlledHams}. We emphasize the Proposition may be applied to generic control functions beyond the AND functions commonly considered in the literature (e.g., multi-controlled Toffoli gates).

\begin{rem}
[Advanced mixing operators for QAOA]
Proposition~\ref{prop:controlledHams} %
may be applied together with the Theorems above  %
to design controlled mixing operators %
for %
QAOA 
mappings of %
various optimization problems with hard %
constraints, with the important property of restricting the quantum evolution %
to the subspace of feasible states~\cite{hadfield2017qaoaPMES,hadfield2019quantum}. In particular, several mixing operators $\Lambda_f (e^{-iH\alpha})$ proposed in  \cite{hadfield2019quantum} 
implement %
 evolution under a local mixing Hamiltonian $B$ %
 controlled by a Boolean function $f$, %
 where the control function checks %
that the mixing action %
on a given basis state   
 will maintain feasibility %
and %
 acts nontrivially %
 only %
 when this is the case. For example, for MaxIndependentSet on a given graph, the transverse-field (bit-flip) mixer is applied for each vertex controlled by the variables corresponding to the its neighbors in the graph; see \cite{hadfield2019quantum} for details. %
 \end{rem}
 
We next consider 
the special case where the target Hamiltonian $H$ corresponds to a bit flip such that %
each computational basis state 
$\ket{y}\ket{0}$ is mapped to $\ket{y}\ket{f(y)}$.
 
\subsubsection{Computing Boolean Functions in a Register}
We show how our results may be applied to %
construct explicit %
unitary operators which reversibly compute function values in an ancilla qubit register  (i.e., implement \textit{oracle queries}), as well as their corresponding Hamiltonians.  
We remark that a related approach for implementing such operators in a specific gate set has recently appeared~\cite{soeken2020quantum}.  
 Recall that in the computational basis, the Pauli $X$ operator acts as $X\ket{0}=\ket{1}$ and $X\ket{1}=\ket{0}$, i.e., as the \textit{bit-flip}, or NOT, operation. %
 
\begin{prop}     \label{prop:computeReg}
For an $n$-bit Boolean function $f$ represented by a Hamiltonian $H_f$, let $G_f$ be %
 the unitary self-adjoint operator on $n+1$ qubits which acts on computational basis states $\ket{x}\ket{a}$ as
\begin{equation}
G_f \ket{x}\ket{a}=\ket{x}\ket{a \oplus f(x) }
\end{equation} 
where $x\in\{0,1\}^n$ and $a\in\{0,1\}$. 
Then
\begin{equation} \label{eq:regfunc}
    G_f =  \Lambda_f(X)  = e^{-i \frac{\pi}{2} \,  H_f \otimes (X-I)},
\end{equation}
and $G_f=I$ if and only if $f$ is unsatisfiable. 
\end{prop}
The proof is given in Section~\ref{sec:computeReg}. 
As before it remains computationally hard to compute the Fourier expansion of such operators in general. 

\begin{cor} \label{cor2}
If $f$ is given as a SAT formula in conjunctive normal form, then  
it is \#P-hard to compute the identity coefficient $\widehat{g}(\emptyset)={\rm tr}(G_f)/2^{n+1}$
of $G_f$, %
and NP-hard to decide if $\;\widehat{g}(\emptyset)\neq 1$.  
\end{cor}

Equation (\ref{eq:regfunc}) shows how Hamiltonian simulation may be used to compute a function~$f$ in a register. In particular, as the Pauli terms in $H_f\otimes X$ and $H_f\otimes I$ mutually commute, $G_f$ can %
be implemented with $2 \cdot{\rm size}(H_f)$ many multiqubit Pauli rotations with locality up to $\deg(H_f)+1$.

\section{Representing $n$-bit Functions as Diagonal Qubit Hamiltonians}  \label{sec:DiagHams}
Many important problems, algorithms, and operators 
in physics and computer science 
naturally involve Boolean predicates. Indeed, the relationship between logical propositions and physical observables is foundational in quantum mechanics~\cite{birkhoff1936logic}. 
For the case of qubits, we show how the natural unique representation of a Boolean function as a Hamiltonian composed of spin operators (Pauli $Z$ matrices)  
follows %
from %
classical Fourier analysis; see \cite{de2008brief,od2014analysis} for overviews of the subject.
We use these tools 
to extend our results to Hamiltonians representing more general functions built from sums, conjunctions, disjunctions, and other basic combinations of simpler Boolean clauses. 

\subsection{Boolean Functions}  \label{sec:boolean}
The class of Boolean functions on $n$ bits is defined as  $\mathcal{B}_n:=\{ f:\{0,1\}^n\rightarrow \{0,1\} \}$. %
Taken as the elements of a real vector space, %
for each $n$ they give a basis for the real functions $\mathcal{R}_n=\{f:\{0,1\}^n\rightarrow \reals \}$ on $n$ bits. 
Moreover, $\mathcal{R}_n$ is isomorphic to the vector space of diagonal Hamiltonians %
acting on $n$ qubits, or, equivalently, the space of $2^n \times 2^n$ diagonal real matrices. 
Thus, diagonal %
Hamiltonians naturally encode large classes of %
functions.

We say a Hamiltonian \textit{represents} a function $f$ if in the computational basis 
it acts as the corresponding multiplication operator, 
i.e., it satisfies the $2^n$ eigenvalue equations 
\begin{equation}   \label{eq:eigenvalueHf2}
    \forall x \in \{0,1\}^n\;\;\;\; H_f\ket{x}=f(x)\ket{x}.
\end{equation}
On $n$ qubits, this condition specifies $H_f$ uniquely (up to the choice of the computational basis). 
Equivalently, we may write $H_f = \sum_x f(x)\ket{x}\bra{x}$, which %
for a Boolean function  
$f \in \mathcal{B}_n$ 
becomes
 \begin{equation}   \label{eq:projector}
H_f = \sum_{x: f(x) =1} \ket{x}\bra{x}.
 \end{equation}
As Boolean functions satisfy $f^2=f$ we have $H_f^2 = H_f$, so $H_f$ is a projector\footnote{Projectors give quantum observables. In particular, for an arbitrary normalized $n$-qubit state $\ket{\psi}$, the probability $p_1$ of a computational basis measurement %
returning a satisfying bit string (i.e., an $x$ such that $f(x)=1$) is given by
$p_1 = \bra{\psi} H_f \ket{\psi}, $
i.e., is equal to the expected value of repeated measurements of %
$H_f$ on the state $\ket{\psi}$.}
 of rank $r=\#f := |\{x:f(x)=1\}| = \sum_x f(x)$.
Hence, 
the Hamiltonian $H_f$ for a Boolean function $f$ is equivalent to %
the projector onto the subspace spanned by basis vectors $\ket{x}$ such that $f(x)=1$, and 
given an~$f$ %
such a projector may be constructed using our results below. Hence,  
determining if $f$ is satisfiable is equivalent to determining if $H_f$ is not identically $0$, and determining $H_f$ explicitly in the form of (\ref{eq:projector}) is 
at least 
as hard as counting the number of satisfying assignments of $f$, or equivalently, computing $r={\rm rank}(H_f)$. 

We consider the standard computational basis of eigenstates of Pauli $Z$ operators  (often written as $\sigma_z$), %
defined by the relations $Z\ket{0} = \ket{0}$ and $Z\ket{1} = -\ket{1}$. 
We use $Z_j= I \otimes \dots I \otimes Z \otimes I \dots \otimes I$ to denote 
$Z$ acting on the $j$th qubit. 
Products of $Z_j$ over a set of qubits act as 
\begin{equation}
\prod_{j\in S} Z_j\ket{x} = \chi_S(x)\ket{x},
\end{equation}
where 
each \textit{parity function} $\chi_S (x): \{0,1\}^n \rightarrow \{-1,+1\}$ gives the parity of the bits of $x$ in the subset $S\subset[n]$, i.e., 
is $+1$ if and only if the number of bits of $x$ set to $1$ is even.  
Identifying each $S$ with its characteristic vector $S\in \{0,1\}^n$
 such that %
 $S\cdot x = \sum_{j \in S} \, x_j$,
we have 
\begin{equation}  \label{eq:parityfcn}
\chi_S (x) = (-1)^{S\cdot x}  = (-1)^{ \oplus_{j \in S} x_j}.
\end{equation}
Thus, each Hamiltonian $Z_S:=\prod_{j\in S} Z_j$ represents the function $\chi_S(x)$ in the sense of (\ref{eq:eigenvalueHf2}).

The set of parity functions on $n$ bits $\{ \chi_S (x): S \subset [n]\}$ also gives a %
basis for the real functions~$\mathcal{R}_n$. 
This basis is orthonormal 
with respect to the inner product defined by %
\begin{equation}     \label{eq:innerProduct}
 \langle f,g \rangle := \frac{1}{2^n} \sum_{x \in \{0,1\}^n} f(x) g(x) .
\end{equation}
Hence, every Boolean function $f\in \mathcal{B}_n$ may be written uniquely as  
\begin{equation}   \label{eq:FourierExpan}
f(x) = \sum_{S\subset [n]} \widehat{f} (S) \, \chi_S (x),
\end{equation}
called the \textit{Fourier expansion} (or, sometimes, \textit{Walsh} or \textit{Hadamard} expansion~\cite{macwilliams1977theory}, or \textit{phase polynomial}~\cite{amy2018controlled})
with %
 \textit{Fourier coefficients} given by the inner products of $f$ with the parity functions 
\begin{equation}   \label{eq:BooleanFourierCoefficients}
\widehat{f} (S) = \frac{1}{2^n} \sum_{x\in \{0,1\}^n} f(x) \chi_{S}(x) = \langle f, \chi_S \rangle,
\end{equation}
and satisfying \textit{Parseval's identity} 
\begin{equation}   \label{eq:parseval}
\sum_{S\subset [n]}\widehat{f}(S)^2 = \frac1{2^n}\sum_x f(x)^2.
\end{equation}
The quantity $\sum_{S\neq \emptyset}  \widehat{f} (S)^2 =:{\rm var}(f)$ is often referred to as the \emph{variance} of $f$. 
For $\{0,1\}$-valued functions, we have $f^2=f$ which implies $\sum_{S\subset [n]} \widehat{f}(S)^2 = \frac1{2^n}\sum_x f(x)=\widehat{f}(\emptyset)=\widehat{f}(\emptyset)^2+{\rm var}(f)$.  

We refer to the mapping from $f(x)$ to $\widehat{f}(S)$ as the \textit{Fourier transform} of~$f$. %
The sparsity of $f$, denoted ${\rm spar}(f)$, is the number of non-zero coefficients $\widehat{f} (S)$. 
When ${\rm spar}(f)={\rm poly}(n)$ we refer to $f$ as  \textit{polynomially sparse}; in particular, polynomially sparse functions %
yield 
Hamiltonians of size ${\rm poly}(n)$.  
The \textit{degree} of $f$, denoted $\deg(f)$, is defined to be the largest $|S|$ such that $\widehat{f}(S)$ is nonzero. Note that if $f$ depends on only $k\leq n$ variables, then $\deg(f) \leq k$.

We summarize our results %
on the representation of Boolean functions in Theorem~\ref{thm:1booleanFourierRep} above.

\begin{proof}[Proof of Theorem~\ref{thm:1booleanFourierRep}]
 From the %
identification of $\chi_S$ with $Z_S= \bigotimes_{j\in S} Z_j=\prod_{j\in S} Z_j$,  %
from (\ref{eq:FourierExpan}) 
we see that each Boolean function $f$ 
is represented as a diagonal Hamiltonian 
by a linear combination of tensor products of~$Z_j$ operators. As all Pauli operators and their tensor products %
are traceless except for the identity operator, 
we have $\widehat{f} (S)  = {\rm tr}( H_f \prod_{j\in S} Z_j) /2^n$%
where %
\footnote{${\rm tr} (H)$ denotes the \textit{trace} of the matrix $H$, i.e., the (basis-independent) sum of its diagonal elements, %
}
The results %
(\ref{eq:thm2eq2}) and (\ref{eq:f0coefficient}) %
follow from Parseval's identity 
(\ref{eq:parseval}) using $f^2=f$. 
Recall we define the degree (sometimes called the \textit{Pauli weight}) of such a Hamiltonian $H_f$, $\deg(H_f)$, to be the largest number of qubits acted on by any term in this sum, %
and the size, ${\rm size}(H_f)$, to be the number of (nonzero) terms.
Clearly, we have $\deg(H_f)=\deg(f)=:d$, and simple counting 
and the bound $\binom{n}{d} \leq n^d (e/d)^d /e$ 
gives ${\rm size}(H_f)={\rm spar}(f)%
\leq d\binom{n}{d}\leq (e/d)^{d-1}n^d +1$.
\end{proof}

We emphasize that 
the Hamiltonian coefficients $\widehat{f}(S)$ depend only on the function values %
$f(x)$, and are independent of \textit{how} such a function may be represented as input (e.g.,  formula, circuit, truth table, etc.). 
Many typical compact representations of Boolean functions as computational input such as Boolean formulas %
or Boolean circuits can be directly transformed to Hamiltonians using the composition rules of Theorem \ref{thm:compositionRules} which we derive below. 

\begin{rem}
The identification of Boolean functions as real polynomials allows application of Theorem~\ref{thm:1booleanFourierRep} to different domains or targets. Changing the target from $\{0,1\}$ to $\{1,-1\}$ for a given $f(x)$ corresponds to the function $g(x) = 2f(x)-1$ with Hamiltonian coefficients $\widehat{g}(\emptyset)=1-2\widehat{f}(\emptyset)$ and 
$\widehat{g}(S)=-2\widehat{f}(S)$ for $S\neq \emptyset$, so this %
can change ${\rm size}(H_f)$ by at most $1$. 
(In this case there are several %
differences %
from the $\{0,1\}$ approach; %
for example, %
Parseval's identity trivially becomes $\sum_{S\subset [n]}\widehat{f}(S)^2 = 1$.) %

Similarly, indeed, the Fourier expansion (\ref{eq:FourierExpan}) itself corresponds to changing the domain from $\{0,1\}^n$ to $\{1,-1\}^n$ (i.e., the polynomial obtained %
replacing each $Z_j$ in (\ref{eq:Hfexpan2}) by the variable $z_j=(-1)^{x_i}\in\{-1,1\}$).
\end{rem}

On the other hand, Theorem \ref{thm:1booleanFourierRep} shows that computing the Hamiltonian representation of a Boolean function is as hard as %
computing its number of satisfying assignments, which is believed to be a computationally intractable problem in general~\cite{AroraBarak}.  %
We illustrate this explicitly in Corollary ~\ref{cor:sharpPFourier} above where we show computing $\widehat{f}(\emptyset)$ %
can be $\#$P-hard. 
Moreover, arbitrary Boolean functions may have size (sparsity) exponential in $n$, %
in which case, even if we know somehow its Hamiltonian representation, 
we cannot implement or simulate this Hamiltonian efficiently (with respect to $n$) with the usual direct approaches.

As %
explained, Hamiltonians representing pseudo-Boolean functions %
often avoid these difficulties; for example, constraint satisfaction problems with objective
function given as the sum of a number %
of \textit{local} clauses %
each clause acting on at
most $k=O(\log n)$ bits (e.g., Max-$k$-Sat), and so the Hamiltonians %
representing each clause have degree $O(\log n)$ and size $O({\rm poly}(n))$,
and hence can be efficiently constructed. 
Applying the well-known results of \cite[Thm. 1 \& 2]{nisan1994degree} 
for such functions  
to Theorem \ref{thm:1booleanFourierRep} 
immediately %
gives the following useful Hamiltonian degree bounds. 
\begin{cor}
For a function $f\in\mathcal{B}_n$ %
that depends only on $k\leq n$ variables,  
represented as a Hamiltonian $H_f$ acting on $n$ qubits,  
the degree of  $H_f$ satisfies 
\begin{equation}
k \geq D(f) \geq \deg(H_f) \geq \log_2 k -O(\log\log k),
\end{equation}
where $D(f)$ is the decision tree complexity of $f$ and $D(f) = O({\rm poly}(\deg(H_f)))$.
\end{cor}
We emphasize that further %
results 
from the literature concerning the Fourier analysis of Boolean functions %
may be similarly %
adapted to obtain properties of corresponding Hamiltonians. %
In particular additional useful %
details of the Fourier coefficients may be found in \cite{od2014analysis,de2008brief,amy2018controlled}.

\subsection{Basic Clauses and Logical Composition Rules} \label{sec:compRules}
Boolean functions arise in many different applications, but  
are often given %
in or easily reduced to a %
\textit{normal form}. 
For example, SAT formulas are given in conjunctive normal form. %
Many other %
normal forms exist such as disjunctive, algebraic ($\oplus$), minterm or maxterm, etc.~\cite{halmos2009boolean}. %
Note that while logically equivalent,
the different forms may be quite %
different for computational purposes. 
For each form there corresponds a notion of size (which directly relates to %
the number of bits needed to describe a function in such a form). 

The laws of classical propositional logic allow for convenient manipulation of a given Boolean expression between logically equivalent forms. 
Hence, given Hamiltonians representing basic Boolean functions, it is useful to have a methodical way to combine them to in order to construct new Hamiltonians representing their logical conjunctions (AND), disjunctions (OR), etc. %
This approach often allows for easier construction of such a Hamiltonian than by working with the Fourier expansion directly. 
First, consider Hamiltonians representing basic functions and variables.   
The trivial functions $f=1$ and $f=0$ (resp. \textit{true} and \textit{false}) are represented by 
the $2^n$-dimensional Hamiltonians $H_1 = I$ and $H_0 = 0$,
respectively. 
Using the identity $(-1)^{x} = 1 - 2x$, we %
represent the $j$th variable $x_j$, considered as a multiplication operator, as the Hamiltonian 
\begin{equation}
     H_{x_j} \,=\, I^{\otimes j-1} \otimes \ket{1_j}\bra{1_j} \otimes I^{\otimes n-j}\,=\, \frac12I - \frac12Z_j, 
\end{equation}
which acts according to the value of the $j$th bit as %
$H_{x_j}\ket{x}=x_j\ket{x}$. 
Similarly, the logical negation of the $j$th variable is represented as $H_{\overline{x}_j}=I/2+Z_j/2$. 
For clarity we will avoid writing tensor factors of identity operators explicitly when there is no ambiguity, and %
for convenience %
we sometimes write $x_j$ to mean the Hamiltonian~$H_{x_j}$, and likewise for other basic functions such as~$\overline{x}_j$. 

Applying the laws of propositional logical leads to the composition rules of Theorem~\ref{thm:compositionRules}. 
 
\begin{proof}[Proof of Thm. \ref{thm:compositionRules}]
The logical values $1$ and $0$ %
(i.e., \textit{true} and \textit{false}) are represented as the identity matrix $I$ and the zero matrix, respectively. 
Each result follows from the natural embedding 
of $f,g \in \mathcal{B}_n$ into $\mathcal{R}_n$, %
the real vector space of real functions on $n$ bits.  
From linearity of the Fourier transform, we immediately have 
$H_{af+bg} = aH_f + bH_g \text{ for } a,b \in \reals.$
Using standard identities, the Boolean operations $(\cdot, \vee,\oplus,\dots)$ on $f,g$ %
can be translated into $(\cdot,+)$ formulas, i.e., linear combinations of $f$ and $g$. 
Linearity %
then gives the resulting Hamiltonian in terms of $H_f$ and $H_g$. 
Explicitly, for the complement of a function $\overline{f}$, as  $\overline{f} = 1 -  f$, we have $H_{\overline{f}} = I - H_f$. Similarly, %
the %
logical identities %
$f\wedge g = fg$, $\; f \vee g = f + g -fg$, $\; f \oplus g = f + g -2fg$, and $f \Rightarrow g = \overline{f} + fg$, respectively, imply the 
remaining results of the theorem.  %
\end{proof} 

We %
summarize the Hamiltonian representations of several basic %
Boolean functions in Table~\ref{tab:basicHams} above, which %
are easily derived from Theorem~\ref{thm:1booleanFourierRep}. 
Applying the laws of %
Theorem~\ref{thm:compositionRules} 
we may derive Hamiltonians representing more complicated Boolean formulas than those of Table~\ref{tab:basicHams}, such as 
expressions with %
arbitrary numbers of variables, 
mixed types of clauses, 
or given as Boolean circuits. 
Some typical examples of Boolean functions on $3$ variables 
are the 
Majority ($MAJ$), Not-All-Equal ($NAE$), and $1$-in-$3$ functions, which behave as their names indicate. %
The Mod$_3$ function is $1$ when the sum $x_1 + x_2 + x_3$ is divisible by $3$, and 
satisfies Mod$_3 = \overline{NAE}$.  
We show the Hamiltonians representing these functions in Table~\ref{tab:3qubitHams}, 
which %
may be derived 
using either the composition rules of Theorem~\ref{thm:compositionRules} or the Fourier expansion approach of Theorem~\ref{thm:1booleanFourierRep}.
For example, %
$H_{1in3}$ follows %
applying Thm.~\ref{thm:compositionRules} %
with the identity 
$1in3(x_1,x_2,x_3)=x_1\overline{x}_2\overline{x}_3 + \overline{x}_1x_2 \overline{x}_3 +\overline{x}_1\overline{x}_2 x_3$.

The rules of Theorem \ref{thm:compositionRules} may be %
applied \textit{recursively}
to construct Hamiltonians representing %
more complicated Boolean functions, 
corresponding e.g. to parentheses in logical formulas, or wires in Boolean circuits. 
For example, the Hamiltonian representing the Boolean clause $f \vee g \vee h=f \vee (g \vee h)$ is given by
$H_{f \vee g \vee h} = 
H_f + H_{g\vee h}  -H_fH_{g\vee h}$, 
which simplifies to  $$H_{f \vee g \vee h} =H_f + H_g +H_h - H_fH_g - H_fH_h - H_gH_h + H_fH_gH_h.$$ 
Another example is the Majority function which satisfies
$$H_{MAJ(f,g,h)}=-2H_fH_gH_h + H_fH_g + H_fH_h + H_gH_h.$$

Together, the rules of Theorem \ref{thm:compositionRules} and Table \ref{tab:basicHams} show how 
to construct Hamiltonians representing functions given as arbitrary Boolean algebra~($\wedge,\vee$) or Boolean ring~($\cdot,\oplus$) elements, which are functionally complete in the sense of representing all possible Boolean functions~\cite{halmos2009boolean}.%

\subsection{Pseudo-Boolean Functions and Constraint Satisfaction Problems} \label{sec:pseudoBool}
Real functions on $n$ bits 
 are similarly represented as diagonal Hamiltonians via their Fourier expansion. 
Every such function $f\in\mathcal{R}_n$ may be expanded (non-uniquely) as a weighted sum of Boolean functions, possibly of exponential size. By linearity of the Fourier transform,  
the Hamiltonian $H_f$ is given precisely by the corresponding weighted sum of the Hamiltonians representing the Boolean functions. Moreover, 
the Hamiltonian $H_f$ is unique, so 
different expansions of $f$ as sums of Boolean functions must all result in the same %
 $H_f$.   

The Fourier coefficients %
are again given by the inner product (\ref{eq:innerProduct}) with the parity functions $ \chi_S$, 
\begin{equation}
\widehat{f} (S) = \langle f, \chi_S \rangle = \frac{1}{2^n} \sum_{x\in \{0,1\}^n} f(x) \chi_{S}(x), 
\end{equation}
and satisfy Parseval's identity as stated in~(\ref{eq:parseval}).
We are particularly interested in \textit{pseudo-Boolean functions}  
given as a weighted sum of logical clauses 
\begin{equation}
f(x) = \sum_{j=1}^m  w_j f_j (x),
\end{equation}
where $f_j \in \mathcal{B}_n$ and $w_j \in \reals$. %
Note that we do not deal explicitly with how the real numbers $w_j$ are represented and stored; for many applications %
they are bounded rational numbers and this issue is relatively minor; see, e.g., the constructions in  \cite{hadfield2019quantum,LucasIsingNP}. 
Indeed, in a constraint satisfaction problem, 
 typically all $w_j = 1$ and hence $f(x)$ gives the number of satisfied clauses (constraints).  

We have the following result which extends the previous %
results for Boolean functions. %

\begin{proof}[Proof of Theorem~\ref{thm:pseudoBool}]
By the linearity of the Fourier expansion and Theorem~\ref{thm:1booleanFourierRep} we have that an $n$-bit %
real function $f:\{0,1\}^n \rightarrow \reals$ is represented as the Hamiltonian  
$H_f = \sum_{S\subset [n]} \widehat{f} (S) \; \prod_{j\in S} Z_j$ with $\widehat{f} (S) = \langle f , \chi_s \rangle = \frac1{2^n}{\rm tr}( H_f \prod_{j\in S} Z_j)\, \in \reals.$ For pseudo-Boolean %
functions $f=\sum_{j=1}^m w_j f_j$ the bounds $d:= \deg(H_f) \leq  \max_j \deg(f_j)$ and 
$\;{\rm size}(H_f)\leq \min\{ \sum_j {\rm size}(H_{f_j})
, (e/d)^{d-1}n^d +1\}$ follow from simple counting as in the proof of  Theorem~\ref{thm:1booleanFourierRep}.
\end{proof}

Thus the results of Theorems~\ref{thm:1booleanFourierRep} and~\ref{thm:compositionRules} for Boolean functions also apply to the construction of Hamiltonians representing real functions, though with several important distinctions.   
Various %
subclasses of so-called pseudo-Boolean functions with particular %
attributes (e.g., submodularity) %
are important in applications 
and their properties may be further studied; see %
\cite{boros2002pseudo} for an overview. 

\begin{rem}  
In contrast to Theorem \ref{thm:1booleanFourierRep}, 
for a constraint satisfaction problem $f=\sum_{j=1}^m  f_j$, with $f_j\in \mathcal{B}_n$, %
applying Parseval's identity~(\ref{eq:parseval}) we have 
$$ \sum_{S\subset [n]}  \widehat{f} (S)^2 
 = \mathbf{E}[f] + 2\sum_{i<j} \langle f_i, f_j \rangle
 = \widehat{f}(\emptyset) + 2 \sum_{i<j} \mathbf{E}[f_i \wedge f_j] \geq \mathbf{E}[f] , $$
where $\mathbf{E}[f]:= \frac{1}{2^n}\sum_{x\in\{0,1\}^n} f(x)$ gives the expected value of $f(x)$ over the uniform distribution. %
In particular, $\sum_{S\subset [n]}  \widehat{f} (S)^2 
 = \mathbf{E}[f]$ if and only if $\langle f_i, f_j \rangle = 0$ for all $i,j$.  
 If there does exist an $i,j$ such that  $\langle f_i, f_j \rangle = 0$, then 
 the conjunction of the clauses is unsatisfiable, i.e. $\wedge_j f_j = 0$. 
\end{rem}

\section{Applications to Constructing Diagonal Operators} \label{sec:diagApplications}
Here we overview four immediate applications of our results to diagonal operators and problem encodings: constrained optimization, and ground state logic, simulating diagonal Hamiltonians, and quadratic optimization.

\subsection{Constrained Optimization} \label{sec:penalties}
In %
constrained optimization problems, we seek an optimal solution subject to satisfying a set of \textit{feasibility} constraints. These constraints may arise as part of the problem itself or from its encoding~\cite{LucasIsingNP,hadfield2019quantum}. %
In quantum annealing, 
a common approach to dealing with problem constraints
 is to augment the Hamiltonian that represents the objective function to be minimized with additional diagonal Hamiltonian terms that penalize (i.e., shift the eigenvalues of) states outside of the feasible subspace \cite{mcgeoch2014adiabatic,LucasIsingNP}.

Suppose we are given %
a constrained real function $f(x)$ %
to minimize, with a set of Boolean hard constraint functions~$g_j$, $j=1,\dots,\ell$, such that 
at least one of the $g_j(x) = 1$ if 
$x$ is an infeasible solution, and so  $\sum_j g_j(x) = 0$ if and only if $x$ is feasible. 
(Here, we may always redefine $f$ as to take a convenient value on infeasible states.) %
We may construct the  augmented problem Hamiltonian as
$$ H_p = H_f + \sum_{j=1}^\ell w_j H_{g_j},$$
where the Hamiltonians $H_f$ and $H_{g_j}$ represent $f$ and the $g_j$ as in Theorems \ref{thm:pseudoBool} and \ref{thm:1booleanFourierRep}, respectively. 
The $w_j$ are positive weights 
which may be selected appropriately such that 
infeasible basis states are eigenvectors of $H_p$ with eigenvalues  shifted away from those of $f$  (e.g., $w_j > \max(x) f(x)$), 
and feasible states are eigenvectors with eigenvalues $f(x)$. 
Hence, the ground state subspace of $H_p$ is spanned by states representing optimal feasible problem solutions. 
Theorems \ref{thm:compositionRules}, \ref{thm:1booleanFourierRep}, and \ref{thm:pseudoBool}  
may be applied to the functions $g_j$ to construct the penalty terms $H_{g_j}$ just as for the cost term $H_f$. 

Thus, our results may also be applied to explicitly construct problem mappings with penalty terms for constrained problems. See \cite{LucasIsingNP} for a number of specific problem mappings; we emphasize our approach may be applied directly to problems not considered therein. Furthermore, similar ideas apply to related approaches for constrained optimization such as %
Lagrange multipliers~\cite{ohzeki2020breaking}.

An alternative approach to constrained optimization %
is to map the hard constraint functions to diagonal Hamiltonians, which %
may be used to design mixing Hamiltonian that preserves the subspace of feasible states. %
This approach is proposed in constraint-preserving 
generalizations of quantum annealing~\cite{Hen2016driver,Hen2016quantum} and QAOA~\cite{hadfield2017qaoaPMES,hadfield2019quantum}, and can offer %
advantages %
over penalty-term approaches;  
see %
\cite{Hen2016driver,Hen2016quantum,hadfield2017qaoaPMES,hadfield2019quantum} for details and %
example  %
problem mappings.

\subsection{Ground State Boolean Logic}
With a universal quantum gate-model computer, a single (additional) control bit suffices for all efficiently computable control functions~\cite{NC}. Indeed, in principle we can always compute a Boolean function $f(x)$ in a control register %
by constructing a unitary operator $U_f:\ket{x}\ket{a} \rightarrow \ket{x}\ket{a\oplus f(x)}$, as we %
consider in Section \ref{sec:computeReg}. %
Nevertheless, in certain models or applications it is desirable to have a
purely Hamiltonian implementation. %

A different approach %
to computing a Boolean function $f\in \mathcal{B}_n$ in a register is to encode 
its input-output pairs as the ground state subspace of a Hamiltonian $H$,
referred to as \textit{ground state Boolean logic}; see, e.g., \cite{biamonte2008nonperturbative,crosson2010making,whitfield2012ground}. 
There are %
different ways to encode a function as the ground state subspace, %
as in there is freedom in how  the Hamiltonian acts on invalid computational basis states. 
For example, the function $AND(x,y)=xy$ can be encoded with the %
subspace $span\{\ket{x}\ket{y} \ket{xy}\} = span\{\ket{000},\ket{010},\ket{100},\ket{111}\}$,  %
which corresponds to the ground state subspace of a number of Hamiltonians. To construct such a Hamiltonian, the penalty term approach of the previous section could be used to penalize the invalid states. 
An alternative construction that takes advantage of the Hamiltonian representation~$H_f$ is to directly implement the Boolean function $g\in\mathcal{B}_{n+1}$ that satisfies  
$g(x,y)=0$ if and only if $y=f(x)$,  
or equivalently %
$g(x,y)=\overline{f(x) \oplus y}$. 
Applying Theorems~\ref{thm:1booleanFourierRep} and~\ref{thm:compositionRules} to this function and simplifying gives %
the following result. %

\begin{prop}
Let $f\in\mathcal{B}_n$ be represented by the %
Hamiltonian 
$H_f$ as in Theorem \ref{thm:1booleanFourierRep}, and $x_a:=\tfrac12 I -\tfrac12Z_a$ where the ancilla qubit is labelled $a$. 
Then the $(n+1)$-qubit Hamiltonian $H_g$ 
\begin{equation}    \label{eq:gsBooleanLogic}
H_g = I \otimes x_a +  H_{f} \otimes Z_a 
\end{equation}
represents the Boolean function $g \in \mathcal{B}_{n+1}$
which satisfies $g(x,y)=0$ if and only if $y=f(x)$, and 
has ground state subspace given by
$$span\{\ket{x}\ket{f(x)}: x\in \{0,1\}^n\}.$$ 

\end{prop}
Simpler Hamiltonians with the same ground state subspace may be found for specific classes of Boolean functions $f$; see, e.g., \cite{bian2010ising}. 
An advantage of the construction (\ref{eq:gsBooleanLogic}) is that it %
applies generally. %

\subsection{Simulating Diagonal Hamiltonians} 
\label{sec:HamSim}
Here we explain how Theorems \ref{thm:1booleanFourierRep} to \ref{thm:pseudoBool} may be straightforwardly applied to yield quantum circuits 
implementing time evolution under diagonal Hamiltonians, i.e., diagonal unitaries, %
which is a fairly well-studied problem with  %
several related approaches proposed in the literature.

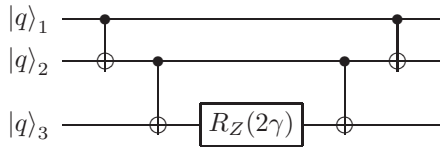
\begin{figure}[h] %
\centerline{
\Qcircuit @C = 1.2 em @R = 1.2 em {
\lstick{\ket{q}_1} & \ctrl{1} & \qw      & \qw                             & \qw       & \ctrl{1}  & \qw \\
\lstick{\ket{q}_2} & \targ    & \ctrl{1} & \qw                              &  \ctrl{1} & \targ    & \qw \\ 
\lstick{\ket{q}_3} & \qw      & \targ    & \gate{R_Z (2\gamma) } & \targ     &  \qw    & \qw \\ 
}
}
\caption{Quantum circuit performing the operation $U=exp(-i\gamma Z_1 Z_2 Z_3)$ on %
three qubits labeled $1$, $2$, and $3$. The middle operator is a $Z$-rotation gate, and the other gates are controlled-NOT (CNOT) gates with
a black circle indicating the control qubit and cross indicating the target. 
By similar circuits, %
$U=exp(-i\gamma Z_1 Z_2 \dots Z_\ell)$ can be implemented with $2(\ell-1)$ CNOT gates and one $R_Z$ gate. 
Different circuit compilations are possible, including compilation to different gate sets. 
} 
\label{fig:RZZcircuit}
\end{figure}

In many applications we desire to \textit{simulate} a Hamiltonian~$H$ for some time $\gamma\in\reals$, i.e., implement exactly or approximately the unitary operator 
$ U(\gamma)= e^{-i\gamma H}.$ 
When $H$ is diagonal, efficient quantum circuits may be obtained using Fourier analysis~\cite{schuch2003programmable,welch2014efficient,amy2018controlled} or other approaches.  
Consider the simulation of a Hamiltonian $H_f$ representing a real or Boolean function $f$. 
It is well known that if %
$f$ can be efficiently computed classically, and if ancilla qubits are available, then the Hamiltonian $H_f$ can be simulated efficiently by computing $f$ in a scratchpad register and performing a sequence of controlled rotations; see, e.g., \cite{childs2004quantum}. 
These methods typically avoid computing the Fourier expansion of $f$ explicitly. %
On the other hand, there exist applications where an explicit Hamiltonian-based implementation is desirable, 
such as quantum annealing, or cases where we wish to minimize the need for ancilla qubits, such as, for example, near-term %
implementations. %
Efficient circuits simulating products of Pauli $Z$ operators are well-known \cite{childs2004quantum,barenco1995elementary}, as 
shown in Figure~\ref{fig:RZZcircuit}. %
As Pauli $Z$ terms mutually commute, circuits simulating individual terms 
in the Hamiltonians (\ref{eq:Hfexpan}) or  (\ref{eq:Hfexpan2})
can be applied in any sequence as to simulate their sum. %
Thus, %
when ${\rm size}(H_f)=O({\rm poly}(n))$ %
we can always simulate $H_f$ efficiently in this way. 
We summarize this observation in the following. %

\begin{cor}
A Hamiltonian $H_f$ 
representing a Boolean or real function~$f$ as in (\ref{eq:Hfexpan}) or (\ref{eq:Hfexpan2}) 
can be simulated, i.e., the operation $exp(-i\gamma H_f)$ implemented, %
with 
$n$ qubits and 
$O(\deg(H_f)\cdot \text{size}(H_f))$ 
basic quantum gates. %
In particular, Hamiltonians~$H_f$ with bounded maximum degree~$d:=\deg(H_f)=O(1)$ can be simulated with~$O(n^d)$ basic gates. Ancilla qubits are not necessary in either case. 
\end{cor}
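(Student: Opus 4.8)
The plan is to exploit the fact that every term in the expansion of $H_f$ is a tensor product of Pauli $Z$ operators, and that all such terms mutually commute since they are simultaneously diagonal in the computational basis. First I would write $H_f = \sum_{S} \widehat{f}(S)\, Z_S$ with $Z_S = \prod_{j\in S} Z_j$ as in (\ref{eq:Hfexpan}), where the sum ranges over the $\mathrm{size}(H_f)$ sets $S$ with $\widehat{f}(S)\neq 0$. Because the $Z_S$ commute pairwise, the exponential of the sum factors \emph{exactly} (with no Trotter/Lie-product error) as
\begin{equation*}
e^{-i\gamma H_f} = \prod_{S:\,\widehat{f}(S)\neq 0} e^{-i\gamma \widehat{f}(S)\, Z_S}.
\end{equation*}
This reduces the task to simulating each single-term unitary $e^{-i\theta Z_S}$ independently and applying them in sequence.

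Next I would bound the cost of each factor. Each $e^{-i\theta Z_{S}}$ with $|S|=\ell\le\deg(H_f)$ is implemented by the standard ``CNOT staircase'' circuit of Figure~\ref{fig:RZZcircuit}: a cascade of CNOTs accumulates the parity of the $\ell$ qubits of $S$ onto a single target qubit (one of the qubits in $S$ itself), a single rotation $R_Z(2\theta)$ is applied to that qubit, and the CNOTs are then undone. This uses $2(\ell-1)$ CNOT gates and one $R_Z$ gate, hence $O(\deg(H_f))$ basic gates per factor, and crucially requires \emph{no ancilla qubits} since the parity is computed in place on the $n$ data qubits. Summing over the $\mathrm{size}(H_f)$ factors then gives a total gate count of $O(\deg(H_f)\cdot\mathrm{size}(H_f))$, proving the first claim; the assumption $\gamma=O(1)$ ensures each rotation angle is bounded so that no additional scaling enters.

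For the ``in particular'' statement, I would invoke the size bound $\mathrm{size}(H_f)\le (e/d)^{d-1}n^d+1$ from Theorem~\ref{thm:1booleanFourierRep}: when $\deg(H_f)=d=O(1)$ each factor costs $O(1)$ gates and there are $O(n^d)$ of them, yielding the $O(n^d)$ total. I would also note that the same circuits apply verbatim to the pseudo-Boolean case (\ref{eq:Hfexpan2}), since by Proposition~\ref{prop:pseudoBool} (equivalently Theorem~\ref{thm:pseudoBool}) such an $H_f$ has the identical $\sum_S \widehat{f}(S)\,Z_S$ form.

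The argument is essentially routine once the structure is set up, so there is no serious obstacle; the only point genuinely requiring care is justifying that the factorization of $e^{-i\gamma H_f}$ into single-term exponentials is \emph{exact} rather than merely approximate, which rests entirely on the mutual commutativity of the $Z_S$ and so should be stated explicitly.
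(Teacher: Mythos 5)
Your proof is correct and follows essentially the same route as the paper: the paper likewise observes that the mutually commuting Pauli $Z$ terms let the exponential factor exactly into single-term rotations, each implemented by the CNOT-staircase circuit of Figure~\ref{fig:RZZcircuit} with $2(\ell-1)$ CNOTs and one $R_Z$ and no ancillas, giving the $O(\deg(H_f)\cdot\mathrm{size}(H_f))$ count and the $O(n^d)$ bound via the binomial-coefficient size estimate. Your explicit emphasis that the factorization is exact (not Trotterized) is exactly the point the paper relies on.
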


Here, by \textit{basic quantum gates} we mean the set of CNOT and single qubit rotation gates, which is a 
standard universal set \cite{barenco1995elementary,NC}.  
We remark that the Hamiltonian simulation considered in the %
corollary is exact in the sense that if each of the %
basic gates %
is implemented exactly, then so is $exp(-i\gamma H_f)$. The approximation of quantum gates and operators is an important topic but we do not deal with it here; see, e.g., \cite{NC}.

\begin{example} \label{ex:grovers}[Application to Grover's algorithm] %
For a %
Boolean function $f$, %
simulating $H_f$ for time $\pi$ gives the standard oracle query for Grover's algorithm \cite{NC}
\begin{equation}  \label{eq:GroversQuery}
e^{-i \pi H_f} \ket{x} = (-1)^{f(x)} \ket{x} .
\end{equation}
Hence, %
when $H_f$ is known explicitly and 
$ \text{size}(H_f)={\rm poly}(n)$,  %
we can %
efficiently construct and implement the operator $(-1)^{f(x)} $  
using quantum circuits for simulating $H_f$ using only CNOT and $R_Z$ gates, without any necessary ancilla qubits. 
\end{example}

\subsection{Quadratic Unconstrained Binary Optimization}
A general and important class of pseudo-Boolean optimization problems are 
\textit{quadratic unconstrained binary optimization} (QUBO) problems \cite{mcgeoch2014adiabatic}, where we seek to maximize or minimize a
degree-two pseudo-Boolean 
function 
\begin{equation}  \label{eq:QUBO}
f(x) = a + \sum_{j=1}^n c_j x_j + \sum_{j<k} d_{jk} x_{j} x_k,
\end{equation}
with $a, c_j, d_{jk}\in \reals$ and $x_j \in \{0,1\}$. %
Indeed, this is the class of problems %
(ideally) implementable on current quantum annealing devices such as, for example, %
D-WAVE machines, where the qubit interactions are themselves quadratic \cite{bian2010ising,mcgeoch2014adiabatic}.  
The QUBO class also contains many problems which at first sight are not quadratic, 
 via polynomial reductions which often require extra variables; %
indeed, %
the %
natural 
QUBO decision problem is %
NP-complete~\cite{GareyJohnson}. 
Note that $\overline{x}_j=1-x_j$, so (\ref{eq:QUBO}) is without loss of generality. Applying our above results gives the following. 

\begin{cor}
The QUBO objective function (\ref{eq:QUBO}) maps to a Hamiltonian given as a quadratic %
sum of Pauli $Z$ operators, with size($H_f) \leq 1 + n/2 + n^2/2$. 
Explicitly, we have
\begin{equation} \label{eq:QUBOHam}
H_f = (a + c + d ) I - \frac12 \sum_{j=1}^n (c_j + d_j) Z_j + \frac14 \sum_{j<k} d_{jk} Z_{j} Z_k,
\end{equation}
where we have defined $c = \frac12 \sum_{j=1}^n c_j, d = \frac14 \sum_{j<k} d_{jk}$, and $d_j = \frac12 \sum_{k:k\neq j} d_{jk}$ with $d_{jk}=d_{kj}$.

Moreover, we can simulate $H_f$, i.e., implement the phase operator $U_p(t)=e^{-it H_f}$, using at most $n$ many $R_Z$ rotation gates and $\binom{n}{2}$ many $R_{ZZ}$ gates.
\end{cor}

The QUBO problem is closely related to the Ising model of interacting spins from physics~\cite{nishimori2001statistical}.  

\begin{example}[Interacting Ising spins]
Consider the related classical ISING decision
problem of %
determining 
whether the ground state energy (lowest eigenvalue) of an Ising model (degree two) diagonal Hamiltonian 
$$ H = a_0 I + \sum_j a_j Z_j + \sum_{j<k} a_{jk} Z_jZ_k  $$ 
is at most a given constant. %
As the number of terms %
is ${ size}(H)=O(n^2)$, 
we can efficiently check the energy $H(x)$ of each candidate ground state $\ket{x}$, $x\in\{0,1\}^n$, so 
the problem is in NP. On the other hand, from Theorem \ref{thm:compositionRules}, we see that the NP-complete problem MAX-$2$-SAT maps to a degree-two Hamiltonian of this form, with solution encoded in the ground state energy of $-H$. 
Thus, from our results it trivially follows 
that ISING is NP-complete. 
Similar arguments can be used to show NP-completeness with 
restricted coefficients values %
(e.g., antiferromagnetic)
or with restricted interaction topologies such as planar %
graphs~\cite{barahona1982computational}.  %
\end{example}

\section{Applications to non-diagonal operators}
\label{sec:nonDiagHams}
Here we consider the application of our results beyond strictly diagonal operators, in particular to constructing controlled unitaries (quantum gates) common in quantum algorithms. 

Just as we have seen that diagonal Hamiltonians correspond to linear combinations of Pauli $Z$ operators, 
tensor products of arbitrary single-qubit Pauli matrices give a basis for the vector space of $n$-qubit Hamiltonians. 
Indeed, a general Hamiltonian $H$ may be expanded as a sum of Pauli matrices as 
\begin{equation}  \label{eq:PauliExpansion}
H = a_0 I + \sum_{j=1}^n \sum_{\sigma = X,Y,Z} a_{j\sigma} \sigma_j 
+ \sum_{j \neq k } \sum_{\sigma = X,Y,Z} \sum_{\lambda = X,Y,Z}  a_{jk\sigma\lambda} \sigma_j  \lambda_k + \dots ,
\end{equation}
with real coefficients %
$a_\alpha \in \reals$. 
(For general linear operators on qubits the same formula holds but with $a_\alpha \in \complex$, see e.g. %
~\cite{woitBook}.) 
The coefficients are easily shown to satisfy
\begin{equation}  \label{eq:PauliTrace}
a_{\alpha} = \frac{1}{2^n} {\rm tr}(\alpha H),    %
\end{equation}
for each of the $4^n$ Pauli terms $\alpha=\sigma_1 \sigma_2 \dots \sigma_n$, %
$\sigma_j \in \{I,X_j,Y_j,Z_j\}$, 
which are orthonormal with respect to the Hilbert-Schmidt inner product $\langle \alpha,\beta\rangle := \tfrac1{2^n}{\rm tr}(\alpha^\dagger \beta)$ that generalizes~(\ref{eq:innerProduct}).

\subsection{Controlled Unitaries and Hamiltonians} 
\label{sec:ControlledHams}
Many quantum algorithms require controlled Hamiltonian evolutions. 
For example, in quantum phase estimation (QPE) \cite{NC}, we require transformations on $(1+n)$-qubit basis states of the form 
$$ \ket{0}\ket{x} \rightarrow \ket{0}\ket{x}, \;\;\;\;\;\;\;\;\;\; \ket{1}\ket{x} \rightarrow  \ket{1}e^{-iHt}\ket{x} ,$$
for various values %
$t=1,2,4,\dots$. 
Consider such a transformation with fixed $t$. 
Labeling the first register (control qubit) $a$, 
the overall unitary %
may be written as
\begin{equation}  \label{eq:cphaseoracle}
\Lambda_{x_a} ( e^{-iHt})  = \ket{0}\bra{0} \otimes I + \ket{1}\bra{1} \otimes e^{-iHt}.
\end{equation}
Recall the notation $ \Lambda_{x_a} ( e^{-iHt})$ indicates the unitary $e^{-iHt}$ controlled by the classical function $x_a$. 
We obtain the Hamiltonian corresponding to this transformation %
by writing %
$\Lambda_{x_a} ( e^{-iHt})=e^{-i\widetilde{H}t}$, which gives
\begin{equation}
 \widetilde{H} = \ket{1}\bra{1} \otimes H= x_a \otimes H = \frac12 I \otimes H - \frac12 Z_a \otimes H.
\end{equation}
Recall that for simplicity we %
sometimes write a function $f$ %
in place of its Hamiltonian representation $H_f$ %
as we have done here for the function  $f(x) = x_a$. 
Note that the control qubit is assumed precomputed here; its value may or may not depend on $x$. 

More generally, consider Hamiltonian evolution controlled by a Boolean function $g\in \mathcal{B}_k$ acting on a $k$-qubit ancilla register. 
In this case we seek to affect the unitary transformation on $(k+n)$-qubit basis states 
$$ \ket{y}\ket{x} \rightarrow \ket{y}\ket{x} \;\;\;\;\;\;\;\;\;\;\;\;\;\;\; \text{if } g(y) = 0,
$$
$$ \ket{y}\ket{x} \rightarrow  \ket{y}e^{-iHt}\ket{x}
\;\;\;\;\;\;\;\;\;\; \text{if } g(y) = 1,$$
which gives the overall unitary
\begin{equation}  \label{eq:conrolUnitary}
 \Lambda_g ( e^{-iH t}) = \sum_{y: g(y)=0} \ket{y}\bra{y} \otimes I + \sum_{y: g(y)=1} \ket{y}\bra{y} \otimes e^{-iH t} = H_{\overline{g}} \otimes I + H_g \otimes e^{-iH t},
 \end{equation}
corresponding to evolution under the Hamiltonian 
\begin{equation}   \label{eq:controlledHam} 
 \widetilde{H}_g = \sum_{y: g(y)=1} \ket{y}\bra{y} \otimes H = H_g \otimes H . 
\end{equation}
These results have been summarized in Proposition \ref{prop:controlledHams} above. 
Note that if the Hamiltonian $H=H_f$
represents a Boolean function $f$, then (\ref{eq:controlledHam}) represents the conjunction of $f$ and $g$, i.e., we have $\widetilde{H}_g = H_g \otimes H_f = H_{g\wedge f}$. 

Moreover, for an arbitrary Hamiltonian $H$, if we can implement the ancilla controlled operator $\Lambda_{x_a} (e^{-iHt})$
for sufficiently many values of $t \in \reals$, then it is straightforward to %
implement a variable-time controlled Hamiltonian simulation operator $\Lambda_{\tau,H}$, 
which acts on basis states as
\begin{equation}
\Lambda_{\tau,H} \ket{\tau}\ket{x}= \ket{\tau}e^{-iH\tau}\ket{x}.
\end{equation}
For example, if $\tau$ was encoded in binary, then  $\Lambda_H$ could be implemented by applying the operators $\Lambda_{x_{a_0}}(e^{-iH2^0})$, $\Lambda_{x_{a_1}}(e^{-iH2^1})$, $\dots$, $\Lambda_{x_{{a_j}}}(e^{-iH2^j})$, $\dots$ in sequence controlled over each bit $\tau_j$ in the first register; see, e.g., \cite{NC}. 

Finally, we remark on the condition %
that the control function and target unitaries act on disjoint sets of qubits, and the relation to spin creation and annihilation operators. 
\begin{example}[Creation/annihilation operators and interacting fermions]  \label{rem:spinOps}
Consider %
the operator $Xx$ which acts as $Xx\ket{0}=0$ and $Xx\ket{1}=\ket{0}$. 
Clearly $Xx=X(I-Z)/2=X/2 +iY/2$ is not self-adjoint. 
Indeed, %
 $Xx$ is equivalently the well-known \textit{spin annihilation} operator $b=\ket{0}\bra{1}$. 
The \textit{spin creation} operator~$b^\dagger$ is similarly given as $b^\dagger=(Xx)^\dagger = xX = X \overline{x}=\ket{1}\bra{0}$, and the spin number (occupation) operator is $b^\dagger b = x=(I-Z)/2$. %
Thus, the product of two Hamiltonians $H_g$ and $H$ acting nontrivially on overlapping qubits is not guaranteed to yield a self-adjoint operator as in~(\ref{eq:controlledHam}). 

For systems of fermions in the second quantized (occupation number) representation \cite{Szabo}, 
Hamiltonians consist of %
polynomials of \textit{fermionic creation and annihilation operators}~$a^\dagger$ and $a$, %
satisfying the \textit{canonical anticommutation relation} algebra $\{a_j,a_k\}=\{a^\dagger_j,a^\dagger_k\}=0$ and $ \{a_j,a^\dagger_k\}=\delta_{jk}$. 
It is easily shown that these relations are satisfied if we %
use spin operators and the parity functions $\chi_S$, $S\subset [n]$ of (\ref{eq:parityfcn}) to represent the fermionic operators as $ a_j = \chi_{\{1,2,\dots, j-1\}}b_j$. %
Hence, from above we may represent these operators as
 $a_j = Z_1 Z_2 \dots Z_{j-1} (X_j+iY_j)/2$ %
  and $a^\dagger_j = Z_1 Z_2 \dots Z_{j-1} (X_j-iY_j)/2$
which reproduces the well-known \textit{Jordan-Wigner transform}.  
This representation is used in many applications, such as quantum algorithms for quantum chemistry, though alternative mappings %
with different tradeoffs are known~\cite{seeley2012bravyi}.  
\end{example}

\subsection{Computing Functions in Registers} \label{sec:computeReg} %
Here we consider the special case of computing a Boolean function in a register and show a similar %
complexity result as for the diagonal Hamiltonian case. As an illustrative example we briefly consider the connection of our results to quantum query complexity and Grover's algorithm.  

Suppose for a Boolean function $f$ we have a unitary operator $G_f$ that acts on each 
$(n+1)$-qubit basis state $\ket{x}\ket{a}$, $a\in \{0,1\}$ as 
\begin{equation} \label{eq:Gf}
G_f \ket{x}\ket{a} = \ket{x}\ket{a\oplus f(x)}.
\end{equation}
If we let the function $f$ be arbitrary and unknown, then each application of $G_f$ (considered a black-box) is called an \textit{oracle query} for $f$. Operators $G_f$ as in (\ref{eq:Gf}) are called \textit{bit queries}. 
Note that $G_f$ may often be derived from a reversible classical circuit for computing $f$, but we consider it abstractly here. (%
The query (\ref{eq:GroversQuery}) for Grover's algorithm %
is often referred to as a \textit{phase query}.) 

We show that $G_f$ also induces a diagonal representation of $H_f$, and hence faces the same computational difficulties for the Boolean case. 
Observe that the diagonal Hamiltonian
\begin{equation}
    H'_f \,:=\, G_f x_a G_f \,=\, \frac12 I - \frac12 G_f Z_a G_f
= x_a + H_f Z_a
\end{equation}
acts on computational basis states as
\begin{equation}  \label{eq:Hfprime}
H'_f \ket{x}\ket{a} = (f(x)\oplus a) \ket{x}\ket{a},
\end{equation}
so we identify
$H'_f=H_{f\oplus a}$. Hence when $a=0$, $H'_f$ gives an $n+1$ qubit representation of $f$.

\begin{proof}[Proof of Prop. \ref{prop:computeReg}] %
The first statement %
follows from applying Prop.~\ref{prop:controlledHams} with $H_f$ as given from Thm.~\ref{thm:1booleanFourierRep} and $H=X_a$. For the second statement, we use (\ref{eq:PauliExpansion}), (\ref{eq:PauliTrace}), and 
(\ref{eq:Hfprime})
to expand $G_f$ as a sum of Pauli terms acting on $n$ qubits $G_f = (1-\widehat{f}(\emptyset))I + \widehat{f}(\emptyset)X_a + \dots$, where none of the terms to the right are proportional to~$I$. Hence %
the identity coefficient $\widehat{g}(\emptyset)={\rm tr}(G_f)/2^{n+1}=1-\widehat{f}(\emptyset)$ %
gives (one minus) the fraction of satisfying assignments of~$f$. 
\end{proof}

We conclude the section with a %
pedagogic example application to oracle models.

\begin{example}[Phase kickback and function queries]
Recall Example~\ref{ex:grovers}. For an arbitrary Boolean function~$f$, a single application of the \textit{bit-query oracle} $G_f$  %
suffices to simulate $(-1)^{f}$ as 
\begin{equation} \label{eq:oracle1}
    G_f  \ket{x}\ket{-}_a=(-1)^{f(x)}\ket{x}\ket{-}_a,
\end{equation} 
using a single ancilla qubit prepared in the state $\ket{-}=\frac{1}{\sqrt{2}} (\ket{0}-\ket{1})={\textrm H}\ket{1}$ and where ${\textrm H}$ denotes the Hadamard (single qubit quantum Fourier transform) gate~\cite{NC}, known as phase kickback. %

Next consider the \textit{controlled-phase-query oracle} $\Lambda_{x_a} ( (-1)^{f} )=\Lambda_{x_a} (e^{-i\pi {H_f}})=(-1)^{f \wedge x_a}$ derived from (\ref{eq:cphaseoracle}). Observe that if we can %
query $\Lambda_{x_a} ( (-1)^{f} )$ 
then we can implement $G_f$ using a single-bit \textit{quantum phase estimation}, which requires two Hadamard gates applied to an ancilla qubit and a single $\Lambda_{x_a} ((-1)^f)$ query as
\begin{equation}  \label{eq:BitToPhase}
{\textrm H}_a \, \Lambda_{x_a} ((-1)^f)  \,  {\textrm H}_a \ket{x}\ket{0}_a = \ket{x}\ket{0\oplus f(x)}_a = G_f \ket{x}\ket{0}_a.
\end{equation}
Similarly, two %
bit queries can simulate $\Lambda_{x_a} ((-1)^f)$  
using an %
ancilla qubit $\ket{0}_b$ to store $f(x)$ %
as 
\begin{equation} \label{eq:phaseToBit}
 (I\otimes G_f )\; 
 R_{Z_a}(-\tfrac{\pi}2) R_{Z_b}(-\tfrac{\pi}2) R_{Z_aZ_b}(\tfrac{\pi}2)
 \; (I\otimes G_f ) \; \ket{x_a}_a \ket{x}\ket{0}_b = c'(-1)^{f(x)\wedge x_a} \ket{x_a}_a \ket{x}\ket{0}_b ,
\end{equation}
where the constant $c'$ is an unimportant global phase. %
(Here the three rotations on the left of (\ref{eq:phaseToBit}) are derived from simulating the Hamiltonian representing the function $x_a\wedge x_b$ for time $\pi$ as to implement the operator $(-1)^{x_a\wedge x_b}$, and the second application of $G_f$ uncomputes the ancilla qubit.)
Hence, %
we easily see that the oracles  
$G_f$ and  $\Lambda_{x_a} (e^{-i \pi H_f})=(-1)^{f(x)\wedge x_a}$ are computationally equivalent, and both are at least as powerful as the phase oracle $(-1)^f$. 
The results of this paper may be similarly applied to the study of complexity and reductions between further classes of oracles.
\end{example}

\section{Discussion and Future Work}\label{sec:discussion}
We have %
shown explicit rules and results for constructing Hamiltonians representing Boolean and pseudo-Boolean functions, %
including important classes of 
objective functions for %
combinatorial optimization. 
 Applications include quantum gate-model and quantum annealing approaches to optimization, the simulation of diagonal Hamiltonians, and the construction of penalty or mixer  %
 Hamiltonians for problems with hard constraints. Moreover, we have shown our results entail quantum circuit constructions for controlled unitaries, %
 in particular, those that reversibly compute a Boolean function in an ancilla register.  
The goal of these results is to give a toolkit which can be used generally towards %
the design and implementation of a wide range of quantum approaches for %
optimization, for related applications such as machine learning \cite{verdon2017quantum}, and beyond. %
Our results give a unified %
view of existing problem mappings in the literature, such as those of \cite{hadfield2019quantum,LucasIsingNP}.

There are a variety of enticing applications and extensions of our results, and we briefly outline several additional directions.  %
We %
emphasize that 
Fourier analysis is generally a very rich topic in computer science, mathematics, and physics, 
in addition to its %
many applications in quantum computing. 
A promising research direction is to further apply these tools, in particular more advanced ideas from the Fourier analysis of Boolean functions as applied in classical computer science, to the design and analysis of quantum algorithms.
As mentioned, we leave %
a detailed analysis of general Hamiltonians acting on qubits or qudits as a topic of future work. 
A next step is to %
further classify \textit{unfaithful} Hamiltonian representations of Boolean functions, where $n$ variables are encoded in %
$n' > n$ qubits, which in particular is an important paradigm 
for embedding problems on physical quantum annealing hardware, 
and, more generally, is related to the theory of quantum error correcting codes. %
Moreover, 
our results %
may have useful applications to %
quantum statistical mechanics, where many important Hamiltonians are given as linear combination of Pauli operators,  %
such as the Ising or quantum XY models \cite{barahona1982computational,lieb2004two}. 
Furthermore, exploring connections to quantum complexity theory 
may be fruitful, %
such as %
Hamiltonian complexity \cite{KempeLocalHam,gharibian2015quantum,cubitt2017universal}, 
or the computational power of restricted classes of quantum circuits~\cite{knill1998power,
bremner2017achieving,bravyi2017quantum}. Finally, it of interest whether techniques from %
quantum computing and quantum information can shed further insight on important open problems in classical computer science~\cite{gu2019majorana,filmus2014real}.

\section*{Acknowledgments}
We thank Al Aho for support and guidance, and the members of the Quantum AI Lab for providing helpful comments and suggestions. This work was initiated thanks to the support of the USRA Feynman Quantum Computing Academy summer internship program and NASA Ames Research Center, and further developed while at Columbia University. S.H. was additionally supported by NASA Academic Mission Services, Contract No. NNA16BD14C. The views and conclusions contained herein are those of the author and should not be interpreted as necessarily representing the official policies or endorsements, either expressed or implied, of the U.S. Government. The U.S. Government is authorized to reproduce and distribute reprints for Governmental purpose not withstanding any copyright annotation thereon.

\bibliographystyle{ieeetr}
\bibliography{bib}

\begin{thebibliography}{10}

\bibitem{nishimoriQA}
T.~Kadowaki and H.~Nishimori, ``Quantum annealing in the transverse {I}sing
  model,'' {\em Phys. Rev. E}, vol.~58, no.~5, p.~5355, 1998.

\bibitem{farhi2000quantum}
E.~Farhi, J.~Goldstone, S.~Gutmann, and M.~Sipser, ``Quantum computation by
  adiabatic evolution,'' {\em arXiv preprint quant-ph/0001106}, 2000.

\bibitem{mcgeoch2014adiabatic}
C.~C. McGeoch, ``Adiabatic quantum computation and quantum annealing: {T}heory
  and practice,'' {\em Synthesis Lectures on Quantum Computing}, vol.~5, no.~2,
  pp.~1--93, 2014.

\bibitem{hogg2000quantum}
T.~Hogg and D.~Portnov, ``Quantum optimization,'' {\em Information Sciences},
  vol.~128, no.~3-4, pp.~181--197, 2000.

\bibitem{Farhi2014}
E.~Farhi, J.~Goldstone, and S.~Gutmann, ``A quantum approximate optimization
  algorithm,'' {\em arXiv preprint arXiv:1411.4028}, 2014.

\bibitem{hadfield2019quantum}
S.~Hadfield, Z.~Wang, B.~O'Gorman, E.~G. Rieffel, D.~Venturelli, and R.~Biswas,
  ``From the quantum approximate optimization algorithm to a quantum
  alternating operator ansatz,'' {\em Algorithms}, vol.~12, no.~2, p.~34, 2019.

\bibitem{peruzzo2014variational}
A.~Peruzzo, J.~McClean, P.~Shadbolt, M.-H. Yung, X.-Q. Zhou, P.~J. Love,
  A.~Aspuru-Guzik, and J.~L. O'Brien, ``A variational eigenvalue solver on a
  photonic quantum processor,'' {\em Nature communications}, vol.~5, 2014.

\bibitem{LucasIsingNP}
A.~Lucas, ``Ising formulations of many {NP} problems,'' {\em Frontiers in
  Physics}, vol.~2, no.~5, pp.~1--15, 2014.

\bibitem{Ausiello2012complexity}
G.~Ausiello, P.~Crescenzi, G.~Gambosi, V.~Kann, A.~Marchetti-Spaccamela, and
  M.~Protasi, {\em {Complexity and approximation: Combinatorial optimization
  problems and their approximability properties}}.
\newblock Springer Science \& Business Media, 2012.

\bibitem{AroraBarak}
S.~Arora and B.~Barak, {\em Computational Complexity - A Modern Approach}.
\newblock Cambridge University Press, 2009.

\bibitem{nielsen1997computable}
M.~A. Nielsen, ``Computable functions, quantum measurements, and quantum
  dynamics,'' {\em Physical Review Letters}, vol.~79, no.~15, p.~2915, 1997.

\bibitem{kahn1988influence}
J.~Kahn, G.~Kalai, and N.~Linial, ``The influence of variables on {B}oolean
  functions,'' in {\em Proc. 29th IEEE Symposium on Foundations of Computer
  Science}, pp.~68--80, IEEE, 1988.

\bibitem{linial1993constant}
N.~Linial, Y.~Mansour, and N.~Nisan, ``Constant depth circuits, {F}ourier
  transform, and learnability,'' {\em Journal of the ACM (JACM)}, vol.~40,
  no.~3, pp.~607--620, 1993.

\bibitem{beigel1993polynomial}
R.~Beigel, ``The polynomial method in circuit complexity,'' in {\em Proc. 8th
  Structure in Complexity Theory Conference}, pp.~82--95, IEEE, 1993.

\bibitem{nisan1994degree}
N.~Nisan and M.~Szegedy, ``On the degree of {B}oolean functions as real
  polynomials,'' {\em Computational complexity}, vol.~4, no.~4, pp.~301--313,
  1994.

\bibitem{hammer2012boolean}
P.~L. Hammer and S.~Rudeanu, {\em {B}oolean Methods in Operations Research and
  Related Areas}, vol.~7.
\newblock Springer Science \& Business Media, 2012.

\bibitem{gu2019majorana}
Y.~Gu and X.-L. Qi, ``Majorana fermions and the sensitivity conjecture,'' {\em
  arXiv preprint arXiv:1908.06322}, 2019.

\bibitem{beals2001quantum}
R.~Beals, H.~Buhrman, R.~Cleve, M.~Mosca, and R.~De~Wolf, ``Quantum lower
  bounds by polynomials,'' {\em Journal of the ACM (JACM)}, vol.~48, no.~4,
  pp.~778--797, 2001.

\bibitem{ambainis2006polynomial}
A.~Ambainis, ``Polynomial degree vs. quantum query complexity,'' {\em Journal
  of Computer and System Sciences}, vol.~72, no.~2, pp.~220--238, 2006.

\bibitem{montanaro2008quantum}
A.~Montanaro and T.~J. Osborne, ``Quantum {B}oolean functions,'' {\em arXiv
  preprint arXiv:0810.2435}, 2008.

\bibitem{boixo2017fourier}
S.~Boixo, V.~N. Smelyanskiy, and H.~Neven, ``Fourier analysis of sampling from
  noisy chaotic quantum circuits,'' {\em arXiv preprint arXiv:1708.01875},
  2017.

\bibitem{od2014analysis}
R.~O'Donnell, {\em Analysis of {B}oolean functions}.
\newblock Cambridge University Press, 2014.

\bibitem{schuch2003programmable}
N.~Schuch and J.~Siewert, ``Programmable networks for quantum algorithms,''
  {\em Physical review letters}, vol.~91, no.~2, p.~027902, 2003.

\bibitem{welch2014efficient}
J.~Welch, D.~Greenbaum, S.~Mostame, and A.~Aspuru-Guzik, ``Efficient quantum
  circuits for diagonal unitaries without ancillas,'' {\em New Journal of
  Physics}, vol.~16, no.~3, p.~033040, 2014.

\bibitem{amy2018controlled}
M.~Amy, P.~Azimzadeh, and M.~Mosca, ``On the controlled-not complexity of
  controlled-not--phase circuits,'' {\em Quantum Science and Technology},
  vol.~4, no.~1, p.~015002, 2018.

\bibitem{bremner2017achieving}
M.~J. Bremner, A.~Montanaro, and D.~J. Shepherd, ``Achieving quantum supremacy
  with sparse and noisy commuting quantum computations,'' {\em Quantum},
  vol.~1, p.~8, 2017.

\bibitem{boros2002pseudo}
E.~Boros and P.~L. Hammer, ``Pseudo-{B}oolean optimization,'' {\em Discrete
  applied mathematics}, vol.~123, no.~1, pp.~155--225, 2002.

\bibitem{barenco1995elementary}
A.~Barenco, C.~H. Bennett, R.~Cleve, D.~P. DiVincenzo, N.~Margolus, P.~Shor,
  T.~Sleator, J.~A. Smolin, and H.~Weinfurter, ``Elementary gates for quantum
  computation,'' {\em Physical review A}, vol.~52, no.~5, p.~3457, 1995.

\bibitem{childs2004quantum}
A.~M. Childs, {\em Quantum information processing in continuous time}.
\newblock PhD thesis, Massachusetts Institute of Technology, 2004.

\bibitem{chakraborty2018power}
S.~Chakraborty, A.~Gily{\'e}n, and S.~Jeffery, ``The power of block-encoded
  matrix powers: improved regression techniques via faster hamiltonian
  simulation,'' {\em arXiv preprint arXiv:1804.01973}, 2018.

\bibitem{hadfield2017qaoaPMES}
S.~Hadfield, Z.~Wang, E.~G. Rieffel, B.~O'Gorman, D.~Venturelli, and R.~Biswas,
  ``Quantum approximate optimization with hard and soft constraints,'' in {\em
  Proceedings of the Second International Workshop on Post Moores Era
  Supercomputing}, PMES'17, (New York, NY, USA), p.~15–21, Association for
  Computing Machinery, 2017.

\bibitem{soeken2020quantum}
M.~Soeken and M.~Roetteler, ``Quantum circuits for functionally controlled not
  gates,'' {\em arXiv preprint arXiv:2005.12310}, 2020.

\bibitem{birkhoff1936logic}
G.~Birkhoff and J.~Von~Neumann, ``The logic of quantum mechanics,'' {\em Annals
  of mathematics}, pp.~823--843, 1936.

\bibitem{de2008brief}
R.~De~Wolf, ``A brief introduction to {F}ourier analysis on the {B}oolean
  cube.,'' {\em Theory of Computing, Graduate Surveys}, vol.~1, pp.~1--20,
  2008.

\bibitem{macwilliams1977theory}
F.~J. MacWilliams and N.~J.~A. Sloane, {\em The theory of error correcting
  codes}, vol.~16.
\newblock Elsevier, 1977.

\bibitem{halmos2009boolean}
S.~Givant and P.~Halmos, {\em Introduction to {B}oolean Algebras}.
\newblock Springer Science \& Business Media, 2008.

\bibitem{ohzeki2020breaking}
M.~Ohzeki, ``Breaking limitation of quantum annealer in solving optimization
  problems under constraints,'' {\em Scientific reports}, vol.~10, no.~1,
  pp.~1--12, 2020.

\bibitem{Hen2016driver}
I.~Hen and M.~S. Sarandy, ``{Driver {H}amiltonians for constrained optimization
  in quantum annealing},'' {\em Phys. Rev. A}, vol.~93, no.~6, p.~062312, 2016.

\bibitem{Hen2016quantum}
I.~Hen and F.~M. Spedalieri, ``Quantum annealing for constrained
  optimization,'' {\em Phys. Rev. Appl.}, vol.~5, no.~3, p.~034007, 2016.

\bibitem{NC}
M.~Nielsen and I.~Chuang, {\em Quantum {C}omputation and {Q}uantum
  {I}nformation}.
\newblock Cambridge UK: Cambridge University Press, 2000.

\bibitem{biamonte2008nonperturbative}
J.~Biamonte, ``Nonperturbative k-body to two-body commuting conversion
  {H}amiltonians and embedding problem instances into {I}sing spins,'' {\em
  Phys. Rev. A}, vol.~77, no.~5, p.~052331, 2008.

\bibitem{crosson2010making}
I.~J. Crosson, D.~Bacon, and K.~R. Brown, ``Making classical ground-state spin
  computing fault-tolerant,'' {\em Phys. Rev. E}, vol.~82, no.~3, p.~031106,
  2010.

\bibitem{whitfield2012ground}
J.~D. Whitfield, M.~Faccin, and J.~Biamonte, ``Ground-state spin logic,'' {\em
  EPL (Europhysics Letters)}, vol.~99, no.~5, p.~57004, 2012.

\bibitem{bian2010ising}
Z.~Bian, F.~Chudak, W.~G. Macready, and G.~Rose, ``The {I}sing model: teaching
  an old problem new tricks,'' tech. rep., D-Wave Systems, 2010.

\bibitem{GareyJohnson}
M.~R. Garey and D.~S. Johnson, {\em Computers and Intractability: A Guide to
  the Theory of {NP}--Completeness}.
\newblock New York, NY, USA: W. H. Freeman \& Co., 1979.

\bibitem{nishimori2001statistical}
H.~Nishimori, {\em Statistical physics of spin glasses and information
  processing: an introduction}, vol.~111.
\newblock Clarendon Press, 2001.

\bibitem{barahona1982computational}
F.~Barahona, ``On the computational complexity of ising spin glass models,''
  {\em Journal of Physics A: Mathematical and General}, vol.~15, no.~10,
  p.~3241, 1982.

\bibitem{woitBook}
P.~Woit, {\em Quantum Theory, Groups and Representations: An Introduction}.
\newblock Springer, 2017.

\bibitem{Szabo}
A.~Szabo and N.~S. Ostlund, {\em Modern Quantum Chemistry: Introduction to
  Advanced Electronic Structure Theory}.
\newblock Dover, 1996.

\bibitem{seeley2012bravyi}
J.~T. Seeley, M.~J. Richard, and P.~J. Love, ``The {B}ravyi-{K}itaev
  transformation for quantum computation of electronic structure,'' {\em The
  Journal of chemical physics}, vol.~137, no.~22, p.~224109, 2012.

\bibitem{verdon2017quantum}
G.~Verdon, M.~Broughton, and J.~Biamonte, ``A quantum algorithm to train neural
  networks using low-depth circuits,'' {\em arXiv preprint arXiv:1712.05304},
  2017.

\bibitem{lieb2004two}
E.~Lieb, T.~Schultz, and D.~Mattis, ``Two soluble models of an
  antiferromagnetic chain,'' in {\em Condensed Matter Physics and Exactly
  Soluble Models}, pp.~543--601, Springer, 2004.

\bibitem{KempeLocalHam}
J.~Kempe, A.~Kitaev, and O.~Regev, ``The complexity of the local {H}amiltonian
  problem,'' {\em SIAM J. Comput.}, vol.~35, no.~5, pp.~1070--1097, 2006.

\bibitem{gharibian2015quantum}
S.~Gharibian, Y.~Huang, Z.~Landau, S.~W. Shin, {\em et~al.}, ``Quantum
  {H}amiltonian complexity,'' {\em Foundations and Trends in Theoretical
  Computer Science}, vol.~10, no.~3, pp.~159--282, 2015.

\bibitem{cubitt2017universal}
T.~S. Cubitt, A.~Montanaro, and S.~Piddock, ``Universal quantum hamiltonians,''
  {\em Proceedings of the National Academy of Sciences}, vol.~115, no.~38,
  pp.~9497--9502, 2018.

\bibitem{knill1998power}
E.~Knill and R.~Laflamme, ``Power of one bit of quantum information,'' {\em
  Physical Review Letters}, vol.~81, no.~25, p.~5672, 1998.

\bibitem{bravyi2017quantum}
S.~Bravyi, D.~Gosset, and R.~K{\"o}nig, ``Quantum advantage with shallow
  circuits,'' {\em Science}, vol.~362, no.~6412, pp.~308--311, 2018.

\bibitem{filmus2014real}
Y.~Filmus, H.~Hatami, S.~Heilman, E.~Mossel, R.~O’Donnell, S.~Sachdeva,
  A.~Wan, and K.~Wimmer, ``Real analysis in computer science: A collection of
  open problems,'' {\em Preprint available at https://simons. berkeley.
  edu/sites/default/files/openprobsmerged.pdf}, 2014.

\end{thebibliography}

\end{document}